\documentclass{article}

\usepackage{arxiv}
\usepackage{xcolor}
\usepackage{amsmath}
\usepackage{amsthm}

\definecolor{AUXLblue}{RGB}{ 51,145,202}
\definecolor{AUXLlightblue}{RGB}{144,196,231} 

\definecolor{AUXLorange}{RGB}{246,174, 60}
\definecolor{AUXLlightorange}{RGB}{254,216,177}

\definecolor{AUXLpurple}{RGB}{135,119,175}
\definecolor{AUXLlightpurple}{RGB}{226,178,255}

\definecolor{AUXLyellow}{RGB}{255,221,  0}
\definecolor{AUXLlightyellow}{RGB}{255,238,127}

\definecolor{AUXLgreen}{RGB}{120,180,88}            
\definecolor{AUXLlightgreen}{RGB}{202,240,142} 

\usepackage{hyperref}
\usepackage[
			capitalize,
			nameinlink,
			noabbrev,
		   ]{cleveref}

\newtheorem{lemma}{Lemma}

\newtheorem{definition}[lemma]{Definition}

\newtheorem{observation}{Observation}

\newtheorem{theorem}{Theorem}

\usepackage{authblk}
\usepackage[utf8]{inputenc} 
\usepackage[T1]{fontenc}    
\usepackage{url}            
\usepackage{booktabs}       
\usepackage{amsfonts}       
\usepackage{nicefrac}       
\usepackage{microtype}      
\usepackage{lipsum}		
\usepackage{graphicx}
\usepackage{doi}

\usepackage{booktabs}
\usepackage{algorithm}
\usepackage{algorithmic}
\urlstyle{same}

\newcommand{\cI}{{\cal I}}

\newcommand{\cM}{{\cal M}}


\newcommand{\R}{\mathbb{R}}

\newcommand{\barT}{\bar{T}}
\newcommand{\bart}{\bar{t}}
\newcommand{\barx}{\bar{x}}
\newcommand{\barX}{\bar{X}}

\newcommand{\E}{\mathrm{E}}

\newcommand{\ms}{\mathrm{MS}}

\newcommand{\opt}{\mathrm{OPT}}
\newcommand{\fr}{\mathrm{FR}}
\newcommand{\pay}{\mathrm{PAY}}

\newcommand{\optv}{v^*}

\newcommand{\optc}{c^*}

\newcommand{\optt}{t^*}
\newcommand{\optmint}{t^{*(min)}}
\newcommand{\optT}{T^*}
\newcommand{\optX}{X^*}

\newcommand{\predL}{\hat{L}}

\newcommand{\predt}{\hat{t}}
\newcommand{\predX}{\hat{X}}
\newcommand{\predx}{\hat{x}}
\newcommand{\predmint}{\hat{t}^{(min)}}
\newcommand{\predT}{\hat{T}}
\newcommand{\predV}{\hat{V}}
\newcommand{\predv}{\hat{v}}
\newcommand{\predi}{\hat{i}}

\newcommand{\predl}{\hat{l}}
\newcommand{\predC}{\hat{C}}
\newcommand{\predc}{\hat{c}}

\newcommand{\sqmechanism}{^{\sqrt{\ }}}

\title{Mechanism Design with Predictions\thanks{This work was supported by Science and Technology Innovation 2030 – ``The Next Generation of Artificial Intelligence" Major Project No.2018AAA0100900.} }


%

\author[1]{Chenyang Xu}
\author[2, 3]{Pinyan Lu}

\affil[1]{\footnotesize College of Computer Science, Zhejiang University}
\affil[2]{\footnotesize ITCS, Shanghai University of Finance and Economics}
\affil[3]{\footnotesize Huawei TCS Lab}
\affil[ ]{\texttt{xcy1995@zju.edu.cn, lu.pinyan@mail.shufe.edu.cn}}

\date{}




\begin{document}
\maketitle

\begin{abstract}
	Improving algorithms via predictions is a very active research topic in recent years. This paper initiates the systematic study of mechanism design in this model. In a number of well-studied mechanism design settings, we make use of imperfect predictions to design mechanisms that perform much better than traditional mechanisms if the predictions are accurate (consistency), while always retaining worst-case guarantees even with very imprecise predictions (robustness). Furthermore, we refer to the largest prediction error sufficient to give a good performance as the error tolerance of a mechanism, and observe that an intrinsic tradeoff among consistency, robustness and error tolerance is common for mechanism design with predictions.
\end{abstract}


\section{Introduction}

A recently popular trend in algorithmic design is augmenting (online) algorithms with imperfect predictions. This line of work suggests that there exists an opportunity to bypass the worst-case lower bounds of online problems, which are caused by the uncertainty of the future. In this setting, the algorithm is given access to error-prone predictions, and its performance is bounded in terms of the quality of the predictions.
The algorithm should perform better than the worst-case bound with accurate predictions, and never perform much worse than the best pure online algorithm even if the prediction error is large.
Many classic online problems have been considered in the context, such as ski rental~\cite{DBLP:conf/nips/PurohitSK18,DBLP:conf/icml/GollapudiP19,DBLP:conf/icml/AnandGP20},
caching~\cite{DBLP:conf/icml/LykourisV18,DBLP:conf/soda/Rohatgi20,DBLP:conf/icalp/JiangP020}, 
and scheduling~\cite{DBLP:conf/nips/PurohitSK18,DBLP:conf/soda/LattanziLMV20,DBLP:conf/spaa/Im0QP21,DBLP:conf/icml/0001X21}.


Mechanism design and online algorithm design share some similarities. Both of them need to deal with missing information. Due to the uncertainty about the future or the agents' private information, algorithms (mechanisms) have to be overly cautious and thereby worst-case bounds arise. Thus, it is interesting to investigate if predictions can help with mechanism design.
This paper proposes a study of mechanism design with predictions. For generality and simplicity, we directly use agents' private information as predictions. This is justifiable in many applications. For example, in repeated auctions, we may use historical bidding records as the predictions. We note that there are several related works~\cite{DBLP:conf/nips/MedinaV17,DBLP:conf/nips/AntoniadisGKK20}. In~\cite{DBLP:conf/nips/MedinaV17}, repeated posted price auctions were considered. 
In~\cite{DBLP:conf/nips/AntoniadisGKK20}, the authors mainly focused on developing an online bipartite matching algorithm with predictions and noticed that the algorithm can be converted to a truthful mechanism.
In this paper, we consider mechanism design with predictions more systematically and investigate a number of different mechanism design settings.

\subsection{Challenge}

We follow the terminology in~\cite{DBLP:conf/nips/PurohitSK18} which is now standard: say a mechanism's
\emph{consistency} and \emph{robustness} are its approximation ratios for accurate predictions and for arbitrarily inaccurate predictions respectively.
Due to the truthfulness requirement in mechanism design, it is very subtle to use predictions to get a good consistency and robustness.



A mechanism consists of two algorithms: the allocation algorithm and the payment algorithm. Due to truthfulness, the allocation algorithm needs to satisfy a certain monotone property, which is a global property of the algorithm. It is usually infeasible to change the allocation for certain inputs based on the predictions without changing the allocation for other inputs. For example, a widely-used approach to ensure robustness in online algorithm design is switching to pure online algorithms when the predictions are found to be unreliable. However, in mechanism design, switching to traditional mechanisms when the prediction error is large may hurt truthfulness. The agent who benefits more from the traditional mechanism may misreport the private information such that the prediction error looks large.

Thus, we need to design the allocation algorithm with predictions as a whole mapping to satisfy the monotone property. This gives a big challenge to maintain a good consistency and robustness.
In other words, it is much less flexible to design a truthful mechanism than an (online) algorithm.

\subsection{Our Contributions}
We study four very different and well-studied mechanism design problems and observe that predictions are indeed helpful.



\paragraph{Revenue-Maximizing Single-Item Auction.} Maximizing the revenue is one of the most fundamental problem in auction design. There is a rich literature (e.g.~\cite{DBLP:journals/mor/Myerson81,DBLP:conf/soda/GoldbergHW01,DBLP:conf/soda/AzarDMW13}). We consider the revenue-maximizing single-item auction and compare the revenue to the highest bid, the most ambitious benchmark. It is well-known that there is no good mechanism in worst-case mechanism design with respect to this goal.
If we assume that all bids belong to $[1,h]$, no deterministic truthful mechanism has an approximation ratio better than $h$ (~\cite{DBLP:conf/soda/GoldbergHW01}). 

With perfect predictions, it is trivial to achieve $1$-approximation since we can simply run anonymous pricing scheme and set the price as the highest predicted value. However, this mechanism is very fragile. The approximation ratio drops to infinity even if there is only a tiny error in the predictions. This is of course highly undesirable. To address the issue, we propose a notion of \emph{error tolerance} to measure how much prediction error the mechanism can tolerate to get a reasonable good approximation. Let $\eta\ge 1$ be the relative prediction error, where $\eta=1$ means that there is no error. We give a deterministic truthful mechanism with $\gamma$-consistent and $h$-robust,  where the robustness ratio matches the worst-case bound of $h$ in the traditional setting and the consistency ratio $\gamma\ge 1$ is a parameter we can choose. The approximation ratio smoothly increases as a function of $\gamma \eta$ when $\eta\leq \gamma$, and then has a big drop after $\eta> \gamma$. Therefore, there is a tradeoff between the consistency and the error tolerance. Moreover, we prove that such a tradeoff is necessary and in some sense optimal for all deterministic truthful mechanisms.
Our mechanism is simple and practical. It is the second price auction with individual reserve prices, where these reserves are set based on the predictions.

\paragraph{Frugal Path Auction.}
Path auction is a reverse auction, where the auctioneer needs to buy a path and pays the edges in the path. The goal here is to minimize the total payment and the benchmark is the second cheapest path. This is a classic problem in frugal mechanism design. The problem was coined by~\cite{DBLP:conf/soda/ArcherT02}. They showed that the VCG mechanism obtains an approximation ratio of $\Theta(n)$, where $n$ is the number of agents, and the ratio is the best possible~\cite{DBLP:conf/soda/ElkindSS04}. We obtain a deterministic truthful mechanism with $2\gamma$-consistent and $(n^2/\gamma)$-robust. In terms of error tolerance, the approximation ratio smoothly increases as a function of $\gamma (1+\eta)$ as long as $\eta \leq \gamma$.
Here we observe a three way tradeoff among consistency, robustness and error tolerance.
The mechanism is the generalized VCG mechanism (a.k.a affine maximizer), where the weights for different agents are set based on the predictions.

Besides the VCG mechanism, \cite{DBLP:conf/focs/KarlinKT05} proposed $\sqmechanism$-mechanism for frugal path auction.
The approximation ratio of $\sqmechanism$-mechanism is still $\Theta(n)$, but it can outperform the VCG mechanism in some graphs. Later, this technique was generalized to more problems in frugal mechanism design~\cite{DBLP:conf/focs/ChenEGP10}. We can also apply our technique to $\sqmechanism$-mechanism to get a similar improvement when predictions are given.

\paragraph{Truthful Job Scheduling.}
Truthful mechanism for scheduling unrelated machines is the center problem of the very first algorithmic game theory paper~\cite{DBLP:journals/geb/NisanR01}, whose goal is to minimize the makespan. This problem is very different from the previous two settings in two aspects: it is a multidimensional mechanism design problem and the objective is not related to the payment. In~\cite{DBLP:journals/geb/NisanR01}, the authors showed that the VCG mechanism gives an approximation ratio of $m$, where $m$ is the number of machines, and proved a lower bound of $2$ for any deterministic truthful mechanism. They conjectured that no deterministic truthful mechanism has an approximation ratio better than $m$. Many papers worked on closing the gap~\cite{DBLP:conf/soda/ChristodoulouKV07,DBLP:conf/mfcs/KoutsoupiasV07,DBLP:conf/sigecom/AshlagiDL09,DBLP:conf/sagt/GiannakopoulosH20,DBLP:journals/corr/abs-2007-04362,DBLP:conf/stoc/0001KK20}.
The closest result so far proves a lower bound of $\Omega(\sqrt{m})$~\cite{DBLP:conf/stoc/0001KK20}.

For this problem, we give a deterministic truthful mechanism with approximation ratio of $O(\min\{\gamma \eta^2, \frac{m^3}{\gamma^2}\})$,
where again $\gamma$ is the consistency parameter we can choose and $\eta$ is the prediction error. Here, we compute an (approximate) optimal allocation based on the predicted information and use that allocation as a guide for the mechanism.

\paragraph{Two-Facility Game on a Line.}Finally, we consider a mechanism without money: two-facility game on a line. This setting was coined by~\cite{DBLP:conf/sigecom/ProcacciaT09}, where the authors gave an upper bound of $n-2$ and a lower bound of 1.5 for deterministic truthful mechanisms.
The lower bound was later improved to 2~\cite{DBLP:conf/wine/LuWZ09} and $(n-1)/2$~\cite{DBLP:conf/sigecom/LuSWZ10}. Finally, \cite{DBLP:journals/teco/FotakisT14} showed a tight lower bound of $n-2$.

Since the space of truthful mechanism without money is more restricted, it is even more difficult to make use of predictions here.
We get a deterministic truthful mechanism with $(1+n/2)$-consistent and $(2n-1)$-robust, whose consistency ratio is slightly better than the best known mechanism. Whether there is a mechanism with $o(n)$-consistent and a bounded robustness is a very interesting open question.

\section{Preliminaries}\label{sec:pre}

This section introduces the terminology necessary to understand the paper. An expert can skip it directly. We take the single-item sealed-bid auction for an example. In the auction, there is a seller that has a single good and several bidders who are interested in buying the good. Each bidder $i$ has a private value $v_i$ representing the maximum willingness-to-pay for the item. Each bidder $i$ privately tells the auctioneer a bid $b_i$, while the auctioneer decides who is the winner (i.e. \emph{the allocation rule}) and how much he needs to pay (i.e. \emph{the payment rule}). Note that the bidder could misreport his maximum willingness-to-pay, namely, it is possible that $b_i\neq v_i$. Say the bidder who sets $b_i=v_i$ is a truthtelling bidder. 

The utility of a bidder is defined as follows. If he is the winner, the utility is $v_i-p_i$, where $p_i$ is the price he needs to pay. Otherwise, the utility is $0$. 

\begin{definition}(\cite{DBLP:journals/eatcs/RoughgardenI17})\label{def:DSIC}
	A mechanism is \emph{truthful} if for any bidder $i$, setting $b_i=v_i$ always maximizes his utility regardless of other bidders' bids, and the utility of any truthtelling bidder is non-negative.
\end{definition}

Now we state a widely-used theorem proposed by~\cite{DBLP:journals/mor/Myerson81}, which helps to design a truthful mechanism or prove the truthfulness of a mechanism for single-parameter environments (e.g. single-item auctions). 

\begin{definition} (Monotonicity)\label{def:Monotonicity}
	An allocation rule is monotone if the winner still wins when he increases the bid unilaterally. 
\end{definition}

\begin{theorem} (Myerson's Lemma~\cite{DBLP:journals/mor/Myerson81})\label{thm:myerson}
	For a single-parameter environment, the mechanism is truthful only if its allocation rule is monotone, while for any monotone allocation rule, there exists an unique payment rule which makes the mechanism truthful. Moreover, such a payment rule can be given by an explicit formula. 
\end{theorem}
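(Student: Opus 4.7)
The plan is to split the proof into the necessity direction (truthful implies monotone) and the sufficiency direction (monotone implies a unique truthful payment rule), and to derive the explicit payment formula as a by-product of the latter.

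For the necessity direction, I would fix a bidder $i$ together with the profile $b_{-i}$ of all other bids, and view the allocation $x_i$ and payment $p_i$ as functions of bidder $i$'s report alone. Picking any two candidate values $v < v'$, I would apply the truthfulness condition of Definition~\ref{def:DSIC} twice: once to a bidder whose true value is $v$ (who weakly prefers reporting $v$ to reporting $v'$) and once to a bidder whose true value is $v'$ (who weakly prefers reporting $v'$ to reporting $v$). This yields $v\cdot x_i(v) - p_i(v) \geq v\cdot x_i(v') - p_i(v')$ and $v'\cdot x_i(v') - p_i(v') \geq v'\cdot x_i(v) - p_i(v)$. Adding the two inequalities cancels the payments and gives $(v'-v)(x_i(v') - x_i(v))\geq 0$, which is precisely the monotonicity statement in Definition~\ref{def:Monotonicity}.

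For sufficiency and the explicit formula, I would start from an arbitrary monotone allocation rule $x_i$. In the binary-allocation single-item setting, monotonicity means the set of winning bids against $b_{-i}$ is upward-closed, so there is a well-defined critical bid $t_i(b_{-i}) := \inf\{b_i : i \text{ wins under } (b_i,b_{-i})\}$, and I would define $p_i$ to be $t_i(b_{-i})$ when $i$ wins and $0$ otherwise. For general monotone rules the corresponding formula is $p_i(b_i) = b_i\cdot x_i(b_i) - \int_0^{b_i} x_i(z)\,dz$. To verify truthfulness I would compute the utility under the truthful report and under a deviation to some $b_i \neq v_i$, express their difference as a signed integral of $x_i$, and invoke monotonicity to show this difference has the right sign. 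Individual rationality is immediate from the integral representation, which makes the truthful utility a non-negative integral of $x_i$.

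For uniqueness, I would revisit the two necessity inequalities and let $v'\to v$ (or, in the discrete case, probe values just below and just above each critical bid), which pins down the slope / jumps of $p_i$ entirely in terms of $x_i$. The natural boundary condition $p_i(0)=0$, forced by combining individual rationality at value $0$ with the no-positive-transfer convention, then fixes the constant of integration and determines $p_i$ uniquely. The main obstacle will be the regularity of $x_i$ in the sufficiency step: for a general monotone rule one must phrase the argument as a Riemann--Stieltjes integral to handle jumps correctly, but for all mechanism design settings considered later in this paper the allocation is binary (or piecewise constant) and the formula collapses to critical-bid pricing, reducing the verification to a straightforward case split on whether the deviation flips the winning status.
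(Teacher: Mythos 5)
The paper does not prove this statement---it is Myerson's Lemma, imported by citation from \cite{DBLP:journals/mor/Myerson81}---so there is no in-paper argument to compare against. Your proposal is the standard and correct proof: the two-inequality summation for necessity, the payment formula $p_i(b_i)=b_i x_i(b_i)-\int_0^{b_i}x_i(z)\,dz$ (collapsing to threshold-bid pricing for binary allocations, exactly as the paper uses in Definition~\ref{def:threshold_bid}) with the signed-integral verification for sufficiency, and the normalization $p_i(0)=0$ forced by individual rationality to pin down uniqueness.
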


For single-item auctions, the unique payment is the winner's threshold bid.

\begin{definition} (Threshold Bid)\label{def:threshold_bid}
	Given a single-item auction mechanism and the bid of each bidder, the threshold bid of a bidder is the minimum bid that he could make and win the auction when all other bidders fix their bids.
\end{definition}


\section{Revenue-Maximizing Single-Item Auction}\label{sec:auction}

In this section, we consider revenue-maximizing single-item auction. 
There is one item and $n>1$ bidders. Each bidder $i$ has a private value $\optv_{i} \in [1,h]$ and reports a bid $b_i\in[1,h]$. Use a $n$-dimensional vector $X$ to denote the allocation of the item, where each entry $x_i$ is either $1$ or $0$ and it equals $1$ only if bidder $i$ wins.  
Each bidder $i$ has an utility $u_i=x_i \cdot \optv_i - p_i$, where $p_i$ is the payment. The auctioneer aims to design a truthful mechanism that maximizes the selling price, and the benchmark is the highest private value.
In mechanism design with predictions, we are given access to the predictions of bidders' private values, denoted by $\predV = \{\predv_i\}_{i\in [n]}$. The predictions are erroneous and we define a natural prediction error $\eta:=\max_{i\in [n]}\{\frac{\optv_{i}}{\predv_{i}}, \frac{\predv_{i}}{\optv_{i}}\}$.

\begin{theorem}\label{thm:auction_main}
	There exists a deterministic truthful mechanism parameterized by $\gamma \geq 1$ with approximation ratio at most $\min \{ f(\eta),h \}$ where 
	$$
	f(\eta):= \left\{
	\begin{aligned}
	&\gamma\eta & \eta \leq \gamma \;\\
	&\max\{ \gamma^2\eta^2, \frac{h\eta}{\gamma^2} \}  & \eta > \gamma. \\
	\end{aligned}
	\right.
	$$
\end{theorem}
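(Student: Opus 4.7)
The plan is to realize the mechanism as a second-price auction with individual reserve prices derived from the predictions, most naturally $r_i := \predv_i/\gamma$. The allocation gives the item to the highest bidder among those with $b_i \geq r_i$, who pays the maximum of his reserve and the second-highest qualifying bid. Monotonicity of this rule in each bidder's bid is immediate, so \Cref{thm:myerson} yields truthfulness with the standard threshold payment rule.

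For the consistency regime $\eta \leq \gamma$, let $i^*$ denote the bidder with the largest true value. Since $\predv_{i^*} \leq \eta \cdot \optv_{i^*} \leq \gamma \cdot \optv_{i^*}$, bidder $i^*$ clears his own reserve; since his bid is also the highest, he wins. The payment is then at least $r_{i^*} \geq \optv_{i^*}/(\gamma\eta)$, yielding an approximation ratio of at most $\gamma\eta$.

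For the regime $\eta > \gamma$ bidder $i^*$ may no longer meet his reserve. If $i^*$ is still valid, the same argument gives ratio at most $\gamma\eta$, comfortably within the claimed bound. Otherwise, I let $w$ denote the actual winner; bounding the payment by $r_w = \predv_w/\gamma \geq \optv_w/(\gamma\eta)$ and combining with the envelope $\optv_i \in [1,h]$ and $\predv_i \in [\optv_i/\eta,\eta\optv_i]$ should recover the two regime terms separately---$\gamma^2\eta^2$ when $\optv_w$ is still within a $\Theta(\gamma\eta)$ factor of $\optv_{i^*}$, and $h\eta/\gamma^2$ once $\optv_w$ has collapsed so that only the trivial $\optv_{i^*}\leq h$ bound is available. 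The $\min\{\cdot,h\}$ part follows from the observation that bids lie in $[1,h]$, so once every payment is at least a constant (e.g.\ enforced via a floor on the reserves such as $\max\{\predv_i/\gamma,1\}$), the ratio is automatically at most $h$ against the highest-value benchmark.

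The main obstacle will be the corner case in which no bidder clears his personalized reserve; a naive implementation then collects zero revenue, destroying the $h$-robustness bound for $\eta>\gamma$. A two-stage fallback to standard Vickrey is tempting but is not truthful, since a bidder with a large $\predv_i$ could prefer to underbid, invalidate himself, and enjoy the cheaper fallback price. The technical crux is to pick the reserves so that at least one bidder is always valid while consistency is preserved; handling this carefully is what determines the precise shape of $f(\eta)$ and, in particular, the jump at $\eta=\gamma$ and the relative sizes of the two regime terms in the statement.
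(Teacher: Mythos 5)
Your overall template---a second-price auction with prediction-based individual reserves, truthful by monotonicity and Myerson---is exactly the paper's approach, and your consistency argument for $\eta\le\gamma$ is essentially the paper's \Cref{lem:auction_small_eta}. But there is a genuine gap, and it sits precisely at the point you flag as "the technical crux" and then leave unresolved: the choice of reserves. With your reserves $r_i=\max\{\predv_i/\gamma,1\}$ the qualifying set can be empty (take all $\predv_i=h$ and all $\optv_i=h/\gamma-\epsilon$, so $\eta$ is barely above $\gamma$, nobody clears $h/\gamma$, revenue is $0$, and the ratio is unbounded while the theorem promises at most $h$). So the mechanism as you describe it does not satisfy the statement, and no truthful "fallback" can be bolted on afterwards, as you correctly note. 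The paper's fix is structural: only the top-predicted bidder gets the reserve $\max\{\predv_1/\gamma,1\}$ tied to his own prediction; every other bidder gets a reserve of either $1$ or $\max\{\predv_1/\gamma^2,1\}$ depending on whether some bidder has $\predv_i<\predv_1/\gamma^2$. This guarantees (for $n\ge 2$) that some bidder always has reserve $1$ and hence always qualifies, which is what delivers the unconditional $h$-robustness.

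The second gap is the $\eta>\gamma$ analysis, which you reduce to "should recover the two regime terms separately" without an argument. The two terms do not come from whether $\optv_w$ is close to $\optv_{i^*}$; in the paper they come from a case split on the \emph{predictions}. If all $\predv_i\ge\predv_1/\gamma^2$, then all true values are within a factor $\gamma^2\eta^2$ of each other, and since $|S|\ge 2$ (this is where the paper's $n\ge 3$ assumption enters) the threshold payment is some other bidder's bid, giving the $\gamma^2\eta^2$ term. Otherwise, either some high-prediction bidder clears the bar $\predv_1/\gamma^2$, giving payment at least $\opt/(\gamma^2\eta)$, or every high-prediction bidder (including bidder $1$) fails it, which forces $\opt\le h\eta/\gamma^2$ even though the payment may drop to $1$; that is the source of the $h\eta/\gamma^2$ term. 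Your single-reserve design does not support this case analysis, so the sketch cannot be completed without first redesigning the reserves along the paper's lines (\Cref{alg:auction} and \Cref{lem:auction_large_eta}).
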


From Theorem~\ref{thm:auction_main}, the claimed mechanism has a robustness ratio $h$, which is the best possible in the traditional setting. The consistency ratio of the mechanism is determined by a parameter $\gamma$. Notice that the function $f(\eta)$ is piecewise and has a ``jump'' when $\eta=\gamma$.
We refer to the length of the first piece in $f(\eta)$ as the error tolerance.
Later in this section, we will prove that this jump point is unavoidable for any deterministic truthful mechanism. 
In order to communicate our main ideas more clearly, this section simplifies the algorithm and the analysis by assuming that the number of bidders is at least $3$. In~\Cref{sec:auction_n2}, we show that removing this assumption still gets the claimed approximation ratio.

\subsection{Mechanism}

We start by giving some intuitions. 
For simplicity of notation, reindex the bidders in the decreasing order of their predicted values. So now bidder $1$ has the largest predicted value $\predv_1$.
Notice that if the predictions are error-free, letting bidder~$1$ win and charging him $\predv_1$ gives the optimal solution. Inspired by this, a natural idea is the following. If the reported bid $b_1$ is no less than the predicted value $\predv_1$, let bidder $1$ win. Otherwise, ignore bidder $1$ and let the remaining bidder with the highest bid win.

Clearly, this allocation rule is monotone. Then due to~\Cref{thm:myerson}, we can ensure the truthfulness by charging the winner the threshold bid. More precisely, if bidder $1$ wins, the payment is $\predv_1$, while if bidder $i$ $(\neq 1)$ wins, the payment is the second highest remaining bid. 
The mechanism has a consistency ratio of $1$, meaning that it gets the optimal value $\predv_1$ when $\eta=1$. However, the approximation ratio will increase to $h$ immediately even if $\eta$ is only slightly larger than $1$. Thus, although the robustness ratio of the mechanism is $h$ which is the best possible in the traditional setting, we cannot refer to it as an ideal mechanism due to the tiny error tolerance. 

One way to get a mechanism with a larger error tolerance is setting a lower bar for bidder $1$. For example, let bidder $1$ win if $b_1$ is at least $\predv_1/ \gamma$, where $\gamma$ is a parameter $\geq 1$. The mechanism will have a comparable performance when $\eta \leq \gamma$, because bidder $1$ will always win and the threshold bid is at least $\predv_1/ \gamma$.  We build on this idea to give our mechanism. Moreover, we refine the way we handle other bidders such that the approximation ratio will not increase to $h$ immediately once the prediction error becomes larger than $\gamma$. The mechanism is described in~\Cref{alg:auction}.

\begin{algorithm}[tb]
	\caption{\; Single-Item Auction with Predictions}
	\label{alg:auction}
	\textbf{Input}: The predicted private values $\predV=\{\predv_i\}_{i\in [n]}$, the bids $B=\{b_i\}_{i\in [n]}$ and a parameter $\gamma \geq 1$. \\
	\textbf{Output}: The winner and the payment.  
	  
	\begin{algorithmic}[1] 
		\STATE Reindex the bidders in the decreasing order of their predicted values.
		\STATE Assign a bar $br(i)$ to each bidder $i$. Initially, $br(i) \leftarrow 1$ $\forall i \in [n]$.  
		\IF { $\forall i \in [n], \predv_{i} \geq \predv_1 / \gamma^2$ }
		\STATE Update the bar of bidder 1: $br(1) \leftarrow \max\{\predv_1/\gamma,1 \}$.
		\ELSE
		\STATE Update the bar of bidder 1: $br(1) \leftarrow \max\{\predv_1/\gamma,1 \}$.
		\STATE For each bidder $i\neq 1$ with $\predv_{i} \geq \predv_1 / \gamma^2$, update the bar: $br(i) \leftarrow \max\{\predv_1/\gamma^2,1\}$.
		\ENDIF
		\STATE Define a bidder set $S:= \{ i \in [n] | b_i \geq br(i) \}$.
		\STATE \textbf{return} the bidder $j\in S$ with the highest bid and the threshold bid $\theta(j)$.
	\end{algorithmic}
\end{algorithm}

\subsection{Analysis}\label{sec:auction_analysis}

Observing that in~\Cref{alg:auction}, increasing any bidder's bid will only improve the chance that he wins, the allocation rule is monotone. Hence, we have the following lemma.
\begin{lemma}\label{lem:auction_DSIC}
	\Cref{alg:auction} is a truthful mechanism.
\end{lemma}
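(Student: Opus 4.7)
The plan is to invoke Myerson's Lemma (Theorem~\ref{thm:myerson}), which for the single-parameter single-item auction setting reduces truthfulness to two ingredients: a monotone allocation rule, and the unique payment rule derived from it. Since Algorithm~\ref{alg:auction} already charges the winner his threshold bid $\theta(j)$, the payment side is taken care of by construction, and the only thing I would need to verify is monotonicity of the allocation in the sense of Definition~\ref{def:Monotonicity}.

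To do that, I would first isolate the key structural fact underlying the whole mechanism: the bars $\{br(i)\}_{i \in [n]}$ computed in lines~2--8 are functions only of the predictions $\predV$ and the parameter $\gamma$, not of the reported bids $B$. With this in hand, the argument proceeds as follows. Fix a bidder $i$, fix the other bidders' bids $b_{-i}$, and suppose $i$ wins at some bid $b_i$; then $b_i \geq br(i)$, placing $i$ in the set $S$, and $b_i$ is the largest bid within $S$. If $i$ raises his bid to $b_i' > b_i$, then (a) $b_i' \geq br(i)$ so $i$ remains in $S$; (b) the membership of every other bidder in $S$ is unaffected since both their bars and their bids are unchanged; and (c) $b_i'$ is still the maximum bid over $S$. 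Hence $i$ continues to win, verifying monotonicity.

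The only subtlety worth flagging is point (b): one must be sure that a change in bidder $i$'s own bid cannot retroactively alter the bars assigned to the other bidders. That is precisely why the bars are computed from $\predV$ alone in the algorithm, and it is the design choice that makes the whole approach compatible with truthfulness. Once monotonicity is established, Myerson's Lemma closes the argument; no further calculation is required.
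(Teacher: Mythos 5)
Your proposal is correct and follows essentially the same route as the paper: observe that the bars depend only on the predictions (not the bids), so a winner who raises his bid stays in $S$ and keeps the highest bid there, giving monotonicity, and then Myerson's Lemma with the threshold-bid payment yields truthfulness. The paper states this more tersely, but the argument is the same; your explicit flagging of why the bars cannot shift when a bidder changes his own bid is a fair elaboration of the paper's one-line observation.
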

Now we focus on the approximation ratio analysis. 
We first give a simple proof of the robustness ratio $h$, and then show that the approximation ratio is at most $f(\eta)$.

\begin{lemma}\label{lem:auction_h}
	\Cref{alg:auction} has an approximation ratio at most~$h$.
\end{lemma}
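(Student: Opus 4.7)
The benchmark is $\opt=\max_i \optv_i \le h$, and every bid lies in $[1,h]$. So to establish an approximation ratio of $h$ it suffices to prove that the mechanism's revenue is always at least $1$. I would break this into three short observations, each a direct inspection of Algorithm~\ref{alg:auction}.

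First, I would show that $br(i)\ge 1$ for every bidder $i$ throughout the execution. The bars are initialised to $1$ in line 2, and the only subsequent updates, in lines 4, 6 and 7, are of the form $\max\{\cdot\,,1\}$, so the invariant $br(i)\ge 1$ is preserved. In particular, any bidder $j$ who ends up winning satisfies $\theta(j)\ge br(j)\ge 1$, since by definition of the threshold bid (Definition~\ref{def:threshold_bid}) the winner must at least clear his own bar.

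Second, I would show that the candidate set $S$ is never empty, so a winner is always produced. Since every bid satisfies $b_i\ge 1$ and the initial bar is $1$, a bidder can only be excluded from $S$ if his bar was raised above $1$. In the ``if'' branch (all $\predv_i\ge\predv_1/\gamma^2$) only bidder $1$'s bar is modified, so bidders $2,\dots,n$ remain in $S$. In the ``else'' branch there exists some bidder $i$ with $\predv_i<\predv_1/\gamma^2$; such a bidder is skipped by line 7, so his bar stays at $1$ and he belongs to $S$. Hence $S\neq\emptyset$ in either case.

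Combining the two observations, the mechanism always selects a winner whose payment is at least $1$, while $\opt\le h$, and therefore $\opt/\pay\le h$, giving the claimed robustness bound.

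The argument is essentially a case check on the two branches of the algorithm; I do not anticipate a genuine obstacle. The only thing that requires a moment's care is ensuring that raising bidder~$1$'s bar cannot empty $S$ in the ``else'' branch, which is exactly why line 7 only touches bidders with $\predv_i\ge\predv_1/\gamma^2$ and leaves the remaining low-predicted bidders as a guaranteed ``safety net'' inside $S$.
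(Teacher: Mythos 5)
Your proof is correct and follows essentially the same route as the paper: the winner's payment is at least his bar, which is at least $1$, while $\opt\le h$. Your additional verification that $S$ is never empty is a detail the paper's proof glosses over, but it does not change the argument.
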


\begin{proof}
	The upshot is that for any bidder $i$, $\theta(i)\geq br(i)$, because if $b_i < br(i)$, bidder $i$ will not be included in $S$ and lose the auction. 
	Since $br(i) \geq 1$ for any bidder $i$, the winner's payment is at least $1$.
	Note that $h$ is the upper bound of all bids. 
	We have the optimal selling price $\opt \leq h$, implying that the approximation ratio is at most $h$. 
\end{proof}

We distinguish two cases to analyze the other bound: (1)~$\eta \leq \gamma$ and (2) $\eta>\gamma$. The main difference of the two cases is whether we can ensure that bidder $1$ is in $S$ or not. 

\begin{lemma}\label{lem:auction_small_eta}
	The approximation ratio of~\Cref{alg:auction} is at most~$\gamma \eta$ if $\eta \leq \gamma$.
\end{lemma}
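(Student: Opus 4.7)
The plan is to argue that under $\eta \leq \gamma$ the top-predicted bidder $1$ is guaranteed to clear his bar and enter the set $S$, which forces the winner's threshold payment to be at least $\hat v_1/\gamma$; since $\hat v_1$ is within a factor $\eta$ of the largest true value, this yields the $\gamma\eta$ bound.

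First, by Lemma \ref{lem:auction_DSIC} the mechanism is truthful, so I would assume throughout the proof that $b_i = v^*_i$ for every bidder $i$. The relevant consequence of the prediction error definition is the two-sided inequality $\hat v_i/\eta \le v^*_i \le \eta\,\hat v_i$. Applying this to bidder $1$ and using $\eta \le \gamma$ gives $v^*_1 \ge \hat v_1/\eta \ge \hat v_1/\gamma$; combined with $v^*_1 \ge 1$ this yields $b_1 = v^*_1 \ge \max\{\hat v_1/\gamma,1\} = br(1)$, so bidder $1$ belongs to $S$.

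Next, I would lower bound the selling price by $\hat v_1/\gamma$ by splitting on the identity of the winner $j$. If $j=1$, then by the definition of the threshold bid (Definition \ref{def:threshold_bid}) the payment $\theta(1)$ is at least $br(1) \ge \hat v_1/\gamma$, since lowering $b_1$ below $br(1)$ removes bidder $1$ from $S$. If $j\neq 1$, the allocation rule selects $j$ only because $b_j \ge b_1$; moreover bidder $1$'s membership in $S$ is independent of $b_j$, so if we lowered $b_j$ below $b_1$, bidder $1$ would become the highest bidder in $S$ and take the item. Consequently $\theta(j) \ge b_1 = v^*_1 \ge \hat v_1/\gamma$. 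In either case the revenue of the mechanism is at least $\hat v_1/\gamma$.

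Finally, to compare with the benchmark I would use that bidder $1$ has the largest predicted value, so for any bidder $i$, $v^*_i \le \eta\,\hat v_i \le \eta\,\hat v_1$, giving $\opt = \max_i v^*_i \le \eta\,\hat v_1$. Dividing the two bounds yields an approximation ratio of at most $\eta\,\hat v_1 / (\hat v_1/\gamma) = \gamma\eta$, as claimed. The only mildly subtle point is the threshold-bid argument in the case $j\neq 1$, where one must verify that bidder $1$'s $S$-membership is preserved as $b_j$ varies; this is immediate because $br(1)$ is determined entirely from the predictions and bidder $1$'s own bid. Everything else is a direct calculation from the prediction-error inequality.
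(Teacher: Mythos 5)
Your proof is correct and follows essentially the same route as the paper's: show that $\eta\le\gamma$ forces bidder $1$ into $S$, conclude the winner's threshold bid is at least $br(1)\ge \hat v_1/\gamma$ in both the $j=1$ and $j\ne 1$ cases, and compare against $\opt\le\eta\hat v_1$. Your extra remark that bidder $1$'s membership in $S$ does not depend on $b_j$ is a slightly more explicit justification of the threshold-bid step than the paper gives, but it is the same argument.
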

\begin{proof}
	In this case, we can claim that bidder $1$ is always in $S$, i.e., $b_1 \geq br(1)$. If $br(1)=1$, $b_1$ is always at least $br(1)$. Otherwise, $br(1)=\predv_1 / \gamma$. Since $\optv_1 \geq \predv_1 /\eta$ and $\eta\leq \gamma$, we still have $b_1 \geq br(1)$.
	
	Suppose that the winner is bidder $j$ and the optimal payment is the private value $\optv_k$ of bidder $k$. 
	If $j=1$, the payment is $\theta(1) \geq br(1)$. Otherwise, the threshold bid of bidder $j$ is at least $b_1\geq br(1)$. Observing that regardless of which case, the payment is at least $br(1)$,
	we can bound the payment $\pay$ as follows:
	\[ \pay \geq br(1) \geq \frac{\predv_1}{\gamma} \geq \frac{\predv_k}{\gamma} \geq \frac{\optv_k}{\gamma\eta} = \frac{\opt}{\gamma\eta }, \]
	completing the proof.
\end{proof}

\begin{lemma}\label{lem:auction_large_eta}
	The approximation ratio of~\Cref{alg:auction} is at most $\max\{\gamma^2\eta^2,\frac{h\eta}{\gamma^2}\}$ if $\eta > \gamma$.
\end{lemma}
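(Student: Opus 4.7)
The plan is to follow the template of Lemma~\ref{lem:auction_small_eta}, but account for the fact that when $\eta>\gamma$ bidder~$1$ may no longer sit in $S$, so we must fall back on either the $\opt$-bidder $k$ (the one with $\optv_k=\opt$) or on the second-highest bid $b_{(2)}$ in $S$. I will use throughout that the winner's payment is $\pay=\max\{br(j),b_{(2)}\}\ge 1$, together with the two basic inequalities $\predv_1\ge\predv_k\ge\opt/\eta$ and $\predv_1\le\eta\optv_1\le\eta\,\opt$ that follow from the error bound and the sorted indexing of predictions; I also assume that predictions lie in $[1,h]$ (otherwise one may clip them).

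I would first dispose of the easy case $\predv_k<\predv_1/\gamma^2$: then $\opt=\optv_k\le\eta\predv_k<\eta\predv_1/\gamma^2\le h\eta/\gamma^2$, so the trivial bound $\pay\ge 1$ already gives ratio at most $h\eta/\gamma^2$. For the rest of the argument I would assume $\predv_k\ge\predv_1/\gamma^2$ and split according to whether $k\in S$.

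If $k\notin S$, then $\optv_k<br(k)$, which together with $\opt\ge 1$ forces $br(k)>1$, so $br(k)\in\{\predv_1/\gamma^2,\predv_1/\gamma\}$; hence $\opt<\predv_1/\gamma\le h/\gamma\le h\eta/\gamma^2$ (using $\gamma<\eta$), and combined with $\pay\ge 1$ the ratio is at most $h\eta/\gamma^2$. If instead $k\in S$, then because $k$ has the largest bid overall, $k$ must be the winner and $\pay\ge br(k)$. Whenever $br(k)\ge\predv_1/\gamma^2$, the bound $\predv_1\ge\opt/\eta$ gives $\pay\ge\opt/(\gamma^2\eta)$ and hence ratio $\le\gamma^2\eta\le\gamma^2\eta^2$.

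The main obstacle is the remaining scenario where $k\in S$ but $br(k)<\predv_1/\gamma^2$, i.e.\ $br(k)=1$ with $\predv_1>\gamma^2$. Inspecting Algorithm~\ref{alg:auction} shows this can occur only in the ``Case~A'' branch (all $\predv_i\ge\predv_1/\gamma^2$) with $k\ne 1$, and here the bar of $k$ alone does not provide a useful payment bound. To overcome this I would exploit the stronger structural fact of Case~A: every $\optv_i\ge\predv_i/\eta\ge\predv_1/(\gamma^2\eta)$, so every bid in $S$ is at least $\predv_1/(\gamma^2\eta)$; and since every bidder $i\ne 1$ has bar $1$ (and bid $\ge 1$), the assumption $n\ge 3$ guarantees $|S|\ge 2$, so the second-highest bid satisfies $b_{(2)}\ge\predv_1/(\gamma^2\eta)$ and hence $\pay\ge\predv_1/(\gamma^2\eta)$. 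Combining with $\opt\le\eta\predv_1$ finally yields ratio $\le\gamma^2\eta^2$, completing the case.
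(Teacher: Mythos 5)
Your proof is correct, and the bounds you derive in each case match the lemma. The route is a genuinely different decomposition from the paper's: the paper pivots on bidder $1$ (first reducing to the case $b_1<br(1)$ via the argument of \Cref{lem:auction_small_eta}, then splitting on which branch of \Cref{alg:auction} was taken, and finally on whether $A\cap S=\emptyset$), whereas you pivot entirely on the $\opt$-bidder $k$ — first on whether $\predv_k\ge\predv_1/\gamma^2$, then on whether $k\in S$, then on the value of $br(k)$. The two decompositions use the same essential ingredients: the chain $\opt=\optv_k\le\eta\predv_k\le\eta\predv_1\le\eta h$, the fact that the winner's payment is at least $\max\{br(j),\,\max_{i\in S\setminus\{j\}}b_i\}$, and, in the ``all predictions within $\gamma^2$'' branch, the observation that $n\ge3$ forces $|S|\ge2$ with every bid in $S$ at least $\predv_1/(\gamma^2\eta)$. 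What your organization buys is that the two troublesome scenarios of the paper's proof (a bidder of $A$ surviving in $S$ versus $A\cap S=\emptyset$) collapse into the single clean dichotomy ``$k\in S$ or not'': when $k\notin S$ its bar already certifies $\opt<\predv_1/\gamma\le h\eta/\gamma^2$, and when $k\in S$ it wins and its bar or the second bid in $S$ certifies the payment. The paper's organization, by contrast, makes the $\gamma\eta$ bound for the sub-event ``bidder $1\in S$'' explicit, which is slightly sharper than the $\gamma^2\eta^2$ you settle for in that regime, but this does not affect the stated lemma. Both proofs share the same implicit assumption $\predv_1\le h$, which you are right to flag.
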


\begin{proof}
In this case, we cannot ensure that $b_1 \geq br(1)$. Notice that if bidder $1$ is still included in $S$, following the same analysis in~\Cref{lem:auction_small_eta} gives the ratio $\gamma\eta$. So we only need to focus on the case that $b_1 < br(1)$.
We further distinguish two subcases: (1) $\forall i \in [n]$, $\predv_i\geq \predv_1 / \gamma^2$, and (2) $\exists i \in [n]$, $\predv_i < \predv_1 / \gamma^2$. Basically, the two terms in the target approximation ratio come from these two subcases respectively. 

For the first subcase, a nice property is that for any two bidders $x,y\in [n]$,
	\[ \frac{\predv_x}{\predv_y} \geq \frac{\predv_x}{\predv_1} \geq \frac{1}{\gamma^2}. \] 
	Thus, we obtain the relationship between their real private values:
	\[\frac{\optv_x}{\optv_y} \geq \frac{1}{\eta^2} \frac{\predv_x}{\predv_y} \geq \frac{1}{\gamma^2 \eta^2}. \]
	Since the optimal value is one of the bidder's private values, for any bidder $x \in [n]$, 
	\[\optv_x \geq \frac{\opt}{\gamma^2 \eta^2}.\]
   Due to the assumption that $n\geq 3$, we have $|S| \geq 2$, implying that the threshold bid of the winner $j$ is at least some other bidder's bid. Thus, \[\pay = \theta(j) \geq \frac{\opt}{\gamma^2 \eta^2}.\]

For the second subcase, the payment might drop to $1$. 
But here, we can show that once the payment is $1$, the optimal payment will be at most $h\eta/\gamma^2$.
  We partition all bidders other than bidder $1$ into two groups: $A:=\{i>1 | \predv_i \geq \predv_1/\gamma^2 \}$ and $B:= \{ i>1 | \predv_i < \predv_1 /\gamma^2 \}$. If $A \cap S \neq \emptyset$, regardless of which group the winner is in, the threshold bid is at least $\predv_1/\gamma^2$. Namely, $\pay \geq \predv_1/\gamma^2$. Supposing that $\opt = \optv_k$, we have
   \[ \pay \geq  \frac{\predv_1}{\gamma^2}  \geq \frac{\predv_k}{\gamma^2} \geq \frac{\opt}{\gamma^2 \eta}.\]
   
   Finally, when $A \cap S = \emptyset$, we prove the claimed ratio by showing that $\opt \leq h\eta/\gamma^2$. 
   Since $A \cap S = \emptyset$, any bidder in $A$ has a bid below the bar $\predv_1/\gamma^2$. Namely, $\optv_i < \predv_1/\gamma^2$ $\forall i \in A$. Recalling that we only need to consider the case that $b_1 < br(1)$, for any bidder $i\in A\cup \{1\}$, $\optv_i < \predv_1 /\gamma.$
  
   Suppose that $\opt = \optv_k$.
   If $k\in A\cup \{1\}$, 
   \[ \opt =\optv_k < \frac{\predv_1}{\gamma} \leq \frac{h}{\gamma} < \frac{h\eta}{\gamma^2}.\]
   Otherwise, $k\in B$. Due to the definition of $B$,
   \[\opt =\optv_k \leq \eta \predv_k \leq \frac{\eta \predv_1}{\gamma^2}\leq \frac{h\eta}{\gamma^2}.\] 

\end{proof}

Combining the above lemmas, we can prove \Cref{thm:auction_main}.

\subsection{Lower Bounds}

This subsection gives two lower bounds of the auction. 

\begin{theorem}\label{thm:auction_jump_point}
	For any deterministic truthful mechanism with a consistency ratio $\gamma$, the approximation ratio is at least $h/\eta$ when $\eta > \gamma$.
\end{theorem}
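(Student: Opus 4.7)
The plan is to squeeze any $\gamma$-consistent deterministic truthful mechanism between two instances that share the same prediction vector but differ drastically in true values. The intuition is that consistency pins the mechanism's threshold bid for the top predicted bidder from below at $h/\gamma$, and as soon as the true top value slips beneath that threshold, the mechanism must fall back on the remaining low-value bidders, from whom it can extract almost nothing.

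Concretely, I would fix the prediction vector $\predV$ with $\predv_1 = h$ and $\predv_i = 1$ for $i \geq 2$, and restrict attention to bid vectors of the form $(b_1, 1, \dots, 1)$; under these frozen inputs the threshold bid $\theta(1)$ of bidder~$1$ is a well-defined number. In the calibration instance where true values coincide with the predictions (so $\eta = 1$), truthful bids are exactly $(h, 1, \dots, 1)$; by consistency the mechanism's revenue is at least $h/\gamma$, and since every bidder $i \geq 2$ can be charged at most his bid of $1$, all of this revenue must come from bidder~$1$. Thus bidder~$1$ wins, and by Myerson's lemma his payment equals $\theta(1)$, so $\theta(1) \geq h/\gamma$.

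Next, for any target error $\eta \in (\gamma, h]$, I would construct an attack instance with the same predictions but true values $v_1 = h/\eta$ and $v_i = 1$ for $i \geq 2$; a direct check gives prediction error exactly $\eta$, and $v_1 \in [1,h]$ because $\eta \leq h$. Under truthful bidding, $b_1 = h/\eta < h/\gamma \leq \theta(1)$, so monotonicity forces bidder~$1$ to lose, and individual rationality caps the payment of any winner in $\{2,\dots,n\}$ at his bid of $1$. The benchmark is $v_1 = h/\eta$, yielding approximation ratio at least $h/\eta$ as claimed.

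The main conceptual obstacle is to treat the threshold bid correctly as a function of \emph{both} the other bids and the predictions: since the mechanism is allowed to use the predictions to shape its allocation, one must hold the predictions fixed and argue that monotonicity and Myerson's lemma still apply to the induced single-parameter problem on $b_1$. Once this is granted, the rest is bookkeeping; the only edge case, $h \leq \gamma$, makes the target $h/\eta \leq 1$ and the bound vacuous.
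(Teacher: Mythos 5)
Your proof is correct and is essentially the paper's own argument: both pin the top predicted bidder's threshold bid at $h/\gamma$ using consistency on the instance where true values match the predictions, then drop $v_1$ to $h/\eta < h/\gamma$ so that bidder~$1$ loses and the winner pays at most $1$. You run the threshold-bid argument in the contrapositive direction (deriving $\theta(1)\geq h/\gamma$ directly) while the paper phrases it as a contradiction via its Observation~1, but the underlying mechanism-design fact is identical.
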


Before proving \Cref{thm:auction_jump_point}, we first state a useful observation which is widely used in the lower bound proofs of auction mechanisms.

\begin{observation}\label{obs:lower_bound}
	Consider two bidders. For any deterministic DISC mechanism, given two bids $b_1,b_2$, if it lets bidder $1$ win and sets the payment to be $p$, the payment will still be $p$ if bidder $1$ increases $b_1$ unilaterally.
\end{observation}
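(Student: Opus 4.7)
The plan is to prove Observation~\ref{obs:lower_bound} directly from the definition of truthfulness (Definition~\ref{def:DSIC}), using a standard two-direction incentive compatibility argument. Let me fix notation: the initial bid profile is $(b_1,b_2)$, bidder~$1$ wins with payment $p$, and bidder~$1$ deviates unilaterally to some $b_1' > b_1$, yielding a new payment $p'$ for the winner of the new profile.

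First I would establish that bidder~$1$ remains the winner at profile $(b_1',b_2)$. This follows immediately from monotonicity (Definition~\ref{def:Monotonicity}): since the mechanism is truthful, by Myerson's Lemma (Theorem~\ref{thm:myerson}) its allocation rule is monotone, and a winner stays a winner when increasing their own bid with the other bid held fixed. So $p'$ is well-defined as bidder~$1$'s payment at $(b_1',b_2)$.

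Next I would use truthfulness twice, thinking of the second bidder's bid $b_2$ as fixed background. Suppose bidder~$1$'s true value equals $b_1$. Reporting truthfully yields utility $b_1 - p$; misreporting $b_1'$ yields utility $b_1 - p'$ (bidder~$1$ still wins, by the previous paragraph). Truthfulness requires $b_1 - p \geq b_1 - p'$, hence $p' \geq p$. Now suppose instead bidder~$1$'s true value equals $b_1'$. Reporting truthfully yields $b_1' - p'$; misreporting $b_1$ yields $b_1' - p$ (bidder~$1$ wins by hypothesis of the observation). Truthfulness requires $b_1' - p' \geq b_1' - p$, hence $p \geq p'$. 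Combining the two inequalities gives $p = p'$.

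There is no real obstacle here; the only thing to be careful about is making sure that in each of the two deviation arguments the deviating bidder indeed wins, so that the utility is $\text{value} - \text{payment}$ rather than $0$. This is automatic: in the first direction the deviated-to profile $(b_1',b_2)$ has bidder~$1$ winning by monotonicity, and in the second direction the deviated-to profile $(b_1,b_2)$ has bidder~$1$ winning by the hypothesis of the observation. Alternatively, one could just invoke the explicit formula in Myerson's Lemma — the payment of the winner is the threshold bid, which depends only on the other bidders' bids and hence is unchanged when bidder~$1$ raises $b_1$ — but the two-sided deviation argument above is self-contained and matches the level of detail typical for such observations.
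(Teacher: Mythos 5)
Your proof is correct and follows essentially the same route as the paper's: both first use Myerson's Lemma (monotonicity) to conclude that bidder~$1$ still wins after raising the bid, and then apply the two-directional deviation argument (true value $b_1$ versus true value $b_1'$) to force $p' \geq p$ and $p \geq p'$. The paper merely phrases this as a proof by contradiction, which is an immaterial difference.
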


\begin{proof}
	Assume for contradiction that $p$ changes when $b_1$ increases to $b_1'$.
	Due to \Cref{thm:myerson}, bidder $1$ will still win. If $p$ decreases, when $\optv_1=b_1$, bidder $1$ will misreport a bid $b_1'$ to reduce his payment. Otherwise, when $\optv_1 =b'_1$, bidder $1$ will also misreport a bid $b_1$ to pay less. Regardless of which case, there is a contradiction.
\end{proof}

\begin{proof}[Proof of \Cref{thm:auction_jump_point}]
	Consider an instance as follows. There are two bidders with the predicted values $\predv_1=h, \predv_2=1$. For any deterministic truthful mechanism $\cM$ with a consistency ratio $\gamma$, we claim that $\cM$ has to let bidder $2$ win when $\optv_1 < h/\gamma$ and $\optv_2=1$. Assume for contradiction that $\cM$ let bidder $1$ win. Since $\cM$ is truthful, $\pay \leq \optv_1 < h/\gamma$. Suppose that we increase $\optv_1$ to $h$. Then the prediction error becomes $1$ and the optimal value increases to $h$. However, due to \Cref{obs:lower_bound}, the payment should still be at most $\optv_1 < h/\gamma$. Hence, the approximation ratio is larger than $\gamma$, contradicting to the consistency of $\cM$.
	
	Since $\cM$ has to let bid $2$ win given  $\optv_1 < h/\gamma,$ $\optv_2=1$, the payment is at most $1$. Thus,
	\[\frac{\opt}{\pay} \geq \optv_1 = \frac{h}{\eta}.\]
\end{proof}

For any $\gamma \leq h^{1/5}$, when $\eta$ is slightly larger than $\gamma$, our ratio $f(\eta)$ is close to $h/\gamma$, which matches the lower bound given by \Cref{thm:auction_jump_point}.
To show the optimality of the mechanism's performance when $\eta \leq \gamma$, we assume that $\eta$ is sampled uniformly from $[1,\gamma]$ and prove a lower bound of the expected approximation ratio. 

\begin{theorem}\label{thm:auction_avg_ratio}
	For any deterministic truthful mechanism with a consistency ratio $\gamma$ $(< (\frac{2}{3}h)^{\frac{1}{4}})$, supposing that the prediction error $\eta$ is a random variable uniformly distributed on $[1,\gamma]$, the expected approximation ratio is at least $(\gamma+1)\gamma/2$, which is exactly the expected ratio of \Cref{alg:auction}.
\end{theorem}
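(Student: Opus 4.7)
The strategy parallels the construction of \Cref{thm:auction_jump_point} and exploits \Cref{obs:lower_bound} repeatedly. My plan is to exhibit, for each $\eta\in[1,\gamma]$, an instance with prediction error exactly $\eta$ on which any deterministic truthful mechanism $\cM$ of consistency ratio $\gamma$ is forced to approximation ratio at least $\gamma\eta$; the stated bound then follows from $\int_1^\gamma \gamma\eta\,d\eta/(\gamma-1)=\gamma(\gamma+1)/2$, which also matches the expected value of the upper bound $\gamma\eta$ of \Cref{lem:auction_small_eta} for \Cref{alg:auction}.

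I will work with two bidders and predictions $\predV=(h/\gamma,\,1)$, which lie in $[1,h]$ since $\gamma<\sqrt{h}$. First, I use $\gamma$-consistency at the diagonal instance (private values equal to $\predV$) to pin down the mechanism's behavior there: the required payment is at least $(h/\gamma)/\gamma=h/\gamma^2$, and since bidder~2's bid of $1$ cannot support such a payment, bidder~1 must win. By \Cref{thm:myerson} the payment equals bidder~1's threshold $p:=t_1(b_2{=}1;\predV)$, and so $p\in[h/\gamma^2,\,h/\gamma]$. The mechanism's only remaining degree of freedom relevant to the argument is the choice of $p$ in this interval.

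Next, for each $\eta\in[1,\gamma]$, I consider two candidate adversarial instances with bidder~2's value fixed at $1$ and bidder~1's value scaled by $\eta$ in opposite directions:
\begin{itemize}
\item Instance $A_\eta$ with private values $(h\eta/\gamma,\,1)$: by \Cref{obs:lower_bound}, raising bidder~1's bid from $h/\gamma$ to $h\eta/\gamma$ with $b_2=1$ unchanged leaves bidder~1 the winner at the same payment $p$, yielding ratio $r(A_\eta)=(h\eta/\gamma)/p$.
\item Instance $B_\eta$ with private values $(h/(\gamma\eta),\,1)$: if $p>h/(\gamma\eta)$, then bidder~1's bid is below threshold and he loses, handing the item to bidder~2 at payment at most $1$, so $r(B_\eta)\ge h/(\gamma\eta)$.
\end{itemize}
Both instances have maximum per-bidder error ratio equal to $\eta$, and both stay inside $[1,h]$ because $\gamma<\sqrt{h}$. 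The adversary picks whichever candidate gives the larger ratio. When $p=h/\gamma^2$ (consistency tight), $r(A_\eta)=\gamma\eta$; when $p=h/\gamma$, instead $p>h/(\gamma\eta)$ for every $\eta>1$, forcing $r(B_\eta)\ge h/(\gamma\eta)$, and the condition $\gamma\eta\le h/(\gamma\eta)$ — equivalent to $\gamma\eta\le\sqrt{h}$ — follows from $\eta\le\gamma<\sqrt{h}$ in the relevant regime, so $r(B_\eta)\ge\gamma\eta$.

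The main obstacle is making the case analysis uniform in the mechanism's choice of intermediate $p\in(h/\gamma^2,\,h/\gamma)$: as $p$ grows, $r(A_\eta)$ shrinks while the ``threshold crossing'' in $B_\eta$ moves, and one must verify that $\max\{r(A_\eta),r(B_\eta)\}\ge\gamma\eta$ cannot be avoided. I expect the clean bookkeeping here to require comparing the crossover point $p^\star=h/(\gamma\eta)$ to the allowable interval and exploiting the $\gamma<\sqrt{h}$ hypothesis once more; a third opposite-perturbation candidate $(h\eta/\gamma,\eta)$ can be added if a single pair is insufficient to cover the regime $p\approx h/\gamma$ and $\eta\approx\gamma$ simultaneously.
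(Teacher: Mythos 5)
Your construction is the same as the paper's (the same two-bidder instance with predictions $(h/\gamma,1)$, the same use of consistency to pin the threshold $p=\theta(1)$ into $[h/\gamma^2,h/\gamma]$, and the same two opposite perturbations of $\optv_1$), but the way you assemble the final bound does not work, and the obstacle you flag at the end is not a bookkeeping issue --- it is fatal to your plan. The pointwise claim $\max\{r(A_\eta),r(B_\eta)\}\ge\gamma\eta$ for every $\eta\in[1,\gamma]$ is simply false for intermediate thresholds. Write $p=(h/\gamma)/\alpha$ with $\alpha\in[1,\gamma]$. For any $\eta<\alpha$ the scaled-down instance $B_\eta$ does not trigger a loss (bidder $1$'s bid $h/(\gamma\eta)$ still exceeds $p$, so he wins and pays $p$, giving ratio only $\alpha/\eta\le\alpha$), while $A_\eta$ gives exactly $\alpha\eta<\gamma\eta$. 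So a mechanism with, say, $\alpha=(1+\gamma)/2$ beats $\gamma\eta$ on the whole interval $\eta\in[1,\alpha)$, and no third perturbation of the form $(h\eta/\gamma,\eta)$ rescues a \emph{pointwise} $\gamma\eta$ bound. There is also a secondary arithmetic slip: you claim $\gamma\eta\le\sqrt h$ follows from $\eta\le\gamma<\sqrt h$, but those hypotheses only give $\gamma\eta\le\gamma^2<h$; the inequality $\gamma\eta\le\sqrt h$ would need $\gamma\le h^{1/4}$, so even your endpoint check at $p=h/\gamma$ is not justified under the stated hypothesis.

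The paper avoids the pointwise claim entirely. It sets $\alpha:=(h/\gamma)/\theta(1)\in[1,\gamma]$ and lets the adversary's choice of instance depend on where $\eta$ falls relative to $\alpha$: for $\eta\in[1,\alpha]$ it uses the scaled-up instance and records ratio $\alpha\eta$ (not $\gamma\eta$); for $\eta\in(\alpha,\gamma]$ it uses the scaled-down instance, where bidder $2$ wins and the ratio is at least $h/(\gamma\eta)\ge h/\gamma^2$. It then lower-bounds the \emph{expectation}
$\frac{1}{\gamma-1}\bigl(\int_1^{\alpha}\alpha\eta\,d\eta+\int_{\alpha}^{\gamma}\tfrac{h}{\gamma^2}\,d\eta\bigr)$
by $\gamma(\gamma+1)/2$, with equality exactly at $\alpha=\gamma$. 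In other words, the loss you incur for $\eta<\alpha$ (getting $\alpha\eta$ instead of $\gamma\eta$) is compensated in expectation by the large ratio $h/\gamma^2$ you collect on $(\alpha,\gamma]$; the bound is an averaging argument over the mechanism's one degree of freedom $\alpha$, not a uniform-in-$\eta$ bound. To repair your proof you should adopt this $\alpha$-dependent case split and carry out the integral comparison rather than trying to force $\max\{r(A_\eta),r(B_\eta)\}\ge\gamma\eta$ for each $\eta$.
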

\begin{proof}
    Without loss of generality, assume that $\gamma > 1$. 
	Consider an instance where there are two bidders with the prediction $\predv_1=h/\gamma, \predv_2=1$. Always set $\optv_2$ to be $1$.
	For any deterministic truthful mechanism $\cM$ with a consistency ratio~$\gamma$~$(<\sqrt{h})$, it has to let bidder $1$ win when $\optv_1=\predv_1=h/\gamma$. Thus, the threshold bid of bidder $1$ is at most $h/\gamma$. Define $\alpha:=(h/\gamma)/\theta(1)$. Since $\cM$ is $\gamma$-consistent, $\alpha\in [1,\gamma]$.
	
	We distinguish two cases according to $\alpha$: (1) $\eta \in [1,\alpha]$, and (2) $\eta \in (\alpha,\gamma]$. For the first case, set $\optv_1=\predv_1 \eta$. Since the threshold bid of bidder $1$ remains unchanged, 
	\[\frac{\opt}{\theta(1)} = \frac{\predv_1 \eta}{ \predv_1 / \alpha} = \eta\alpha.\]
	For the second case, set $\optv_1 = \predv_1 /\eta$. Due to the definition of $\alpha$, $\cM$ lets bidder $2$ win, implying that the approximation ratio is at least $h/(\gamma \eta) \geq h/\gamma^2$. 
	
	Now we are ready to analyze the expected approximation ratio. Supposing that $\cM$ has a ratio function $g(\eta)$, we have
	\begin{equation*}
		\begin{aligned}
		\E(g(\eta)) - \frac{\gamma(\gamma+1)}{2} &\geq \int_1^{\alpha} \frac{\alpha \eta}{\gamma-1}d\eta + \int_{\alpha}^{\gamma} \frac{h}{\gamma^2(\gamma-1)} d\eta - \frac{\gamma(\gamma+1)}{2} \\
		& = \frac{\alpha(\alpha^2-1)}{2(\gamma-1)} + \frac{h(\gamma-\alpha)}{ \gamma^2(\gamma-1)}  - \frac{\gamma(\gamma^2-1)}{2(\gamma-1)}\\
            & = \frac{1}{2(\gamma-1)} \cdot \left( \alpha^3-\alpha + \frac{2h(\gamma-\alpha)}{\gamma^2} - \gamma^3+\gamma \right)\\
            & = \frac{\gamma-\alpha}{2(\gamma-1)} \cdot \left( 1+ \frac{2h}{\gamma^2} - (\gamma^2+\gamma \alpha + \alpha^2) \right)\\
            & \geq \frac{\gamma-\alpha}{2(\gamma-1)} \cdot \left( 1+ \frac{2h}{\gamma^2} - 3\gamma^2 \right) \geq 0,
		\end{aligned}
	\end{equation*}
        where the last inequality used that $\gamma < (\frac{2}{3}h)^{\frac{1}{4}}$.
\end{proof}





\section{Frugal Path Auction}

In frugal path auction, there is a graph $G=(\{s,t\}\cup V,E)$ with at least two edge-disjoint $s$-$t$ paths. 
Each edge $e\in E$ is owned by an agent and the cost $\optc_e$ is a secret known only to that agent. 
Denote the number of agents by $n$.
Each agent $e$ submits a sealed bid~$b_e$. Then, the auctioneer selects an $s$-$t$ path $L$ as the winner and gives payment $p_e$ to each $e\in L$. If an agent wins, the utility is $p_e-\optc_e$. Otherwise, the utility is $0$. The goal is to design a truthful mechanism such that the total payment $\sum_{e\in E} p_e$ is minimized. In mechanism design with predictions, the auctioneer is given access to the edge cost predictions $\predC = \{\predc_e\}_{e\in E}$. 

For a frugal mechanism, a standard way to measure its performance is \emph{frugal ratio}. The formal definition of frugal ratio is a long statement (see~\cite{DBLP:conf/focs/KarlinKT05}). For simplicity, this section states an equivalent but more accessible definition specific to the path auction problem given in~\cite{DBLP:conf/soda/ArcherT02}. 
Say a frugal path mechanism~$\cM$. 
For an instance~$\cI$, use $P(\cI)$ to denote the total payment of $\cM$ and $S(\cI)$ to denote the cost of the second cheapest $s$-$t$ path.
The frugal ratio of a mechanism $\cM$ is defined to be $\fr = \sup_{\cI} P(\cI)/S(\cI)$.

For the brevity of the algorithm's statement and analysis, we assume that the graph consists of just some parallel $s$-$t$ paths and each edge has a positive cost. Notice that the assumptions do not affect the problem's hardness. The worst-case bound is still $\Omega(n)$ under the assumptions.  

\begin{theorem}\label{thm:path_auction_main}
	There exists a deterministic truthful mechanism parameterized by $\gamma \in [1,n^{1/3}]$ with frugal ratio at most $f(\eta)$,
	where the prediction error $\eta:=\max_{e\in E}\{\frac{\optc_{e}}{\predc_{e}}, \frac{\predc_{e}}{\optc_{e}}\}$ and 
	$$
	f(\eta):= \left\{
	\begin{aligned}
	&\gamma(1+\eta) & \eta \leq \gamma&  \\
	&\frac{n^2}{\gamma} & \eta > \gamma&. \\
	\end{aligned}
	\right.
	$$ 
\end{theorem}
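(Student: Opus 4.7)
The plan is to use an affine maximizer (weighted VCG) whose per-edge weights $w_e$ are determined by the predictions $\predC$ and the consistency parameter $\gamma$. A natural starting point is to compute the predicted cheapest $s$-$t$ path $\predL := \arg\min_L \sum_{e \in L} \predc_e$ and assign weights that bias the selection toward $\predL$; the relative weighting between edges inside and outside $\predL$ is chosen to trade consistency against robustness. The mechanism then selects the path $L^*$ minimizing $\sum_{e \in L^*} w_e b_e$ and charges each winning edge its threshold bid, derivable from the standard affine-maximizer payment formula. Since affine maximizers with positive weights always induce monotone allocation rules, truthfulness is immediate from \Cref{thm:myerson}.

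For the consistency regime $\eta \leq \gamma$, I would argue that the prediction-based weighting forces $\predL$ to be the winner: the advantage built into $\predL$'s weighted cost dominates the multiplicative drift of at most $\eta \leq \gamma$ between $\predc_e$ and $\optc_e$, so no alternative path can compete in weighted cost. Given that $\predL$ wins, I would apply the threshold-bid formula to each $e \in \predL$, sum the payments, and bound the leading term by $\gamma$ times the true cost of the cheapest alternative path. That alternative's true cost differs from its prediction by at most a factor of $\eta$, so the bound splits naturally into a prediction-driven piece of order $\gamma S(\cI)$ and an error-driven piece of order $\gamma \eta S(\cI)$, combining to the $\gamma(1+\eta) S(\cI)$ guarantee.

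For the robustness regime $\eta > \gamma$, the prediction $\predL$ may be arbitrarily far from the truth, and the winner $L^*$ may be an entirely different path. Using the parallel-path assumption, I would observe that $L^*$ is either exactly $\predL$ or edge-disjoint from it, and bound the threshold payment of each winning edge in terms of the weight ratio times the cost of the cheapest alternative path; combining this with the trivial upper bound of $n$ on the number of winning edges accounts for one factor of $n$ (analogous to the standard VCG bound), while the weight bias toward $\predL$ can inflate per-edge payments by another factor of $n/\gamma$ in the worst case, yielding the $n^2/\gamma$ frugal ratio.

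The main obstacle I foresee is the consistency analysis. One must carefully track how the multiplicative drift in each individual edge's cost propagates through the sum defining the total threshold payment, and verify that the combined expression has exactly the shape $\gamma(1+\eta)$ rather than the looser $\gamma \eta$ or $\gamma + \eta^2$; extra care is needed near the boundary $\eta = \gamma$, where the argument that $\predL$ necessarily wins becomes tight. The restriction $\gamma \in [1, n^{1/3}]$ presumably enters when reconciling the two regimes at the jump point $\eta = \gamma$, so that the piecewise function $f(\eta)$ remains monotone and well-behaved across the transition from $\gamma(1+\gamma)$ to $n^2/\gamma$.
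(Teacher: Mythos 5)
There is a genuine gap, and it sits exactly where you flagged it: the consistency analysis. You propose an affine maximizer whose weights are \emph{fixed} functions of the predictions (a static bias toward $\predL$). No such fixed weighting can achieve the $\gamma(1+\eta)$ consistency bound. To see why: if $\predL$ receives a fixed multiplicative advantage $\alpha$ over the other paths, then (i) to guarantee that $\predL$ wins whenever $\eta\leq\gamma$ you need $\alpha\gtrsim\gamma^2$, since the true cost of $\predL$ can exceed the true cost of the best alternative $L'$ by a factor of $\eta^2$; but (ii) with that advantage in place, \emph{each} edge $e\in\predL$ can unilaterally raise its bid by roughly $\alpha\, \optc(L')$ before $\predL$ loses, so its threshold bid is $\theta(e)=\alpha\,\optc(L')-\optc(\predL)+\optc_e$, and summing over the up to $n$ edges of $\predL$ gives a total payment of order $|\predL|\cdot\alpha\cdot \optc(L')=\Omega(n\gamma^2)\cdot S(\cI)$ even with a \emph{perfect} prediction. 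That is off from the claimed $\gamma(1+\eta)$ by a factor of about $n$. The same objection applies to any static per-edge reweighting; and the obvious alternative of simply excluding overbidding edges of $\predL$ is also ruled out, since deleting a path can make the frugal ratio unbounded (two cheap paths and one infinitely expensive one).

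The paper's mechanism avoids this by making the weight of $\predL$ depend on the \emph{reported bids}: each edge $e\in\predL$ gets a bar $\gamma\predc_e$, and $\predL$ keeps the favorable weight $\gamma/n$ only while every one of its edges bids at or below its bar; if any edge overbids, the weight flips to $n/\gamma$ and $\predL$ becomes uncompetitive. This caps the threshold bid of each $e\in\predL$ at essentially $\max\{\gamma\predc_e,\ \tfrac{\gamma}{n}\optc(L')-\optc(\predL)+\optc_e\}$, so the payments sum to at most $\gamma\predc(\predL)+\gamma\,\optc(L')\leq\gamma(1+\eta)\,\optc(L')$ --- that is the step your outline cannot reproduce. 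Two further points then need checking that your proposal glosses over: monotonicity is no longer ``immediate'' for a bid-dependent weight rule (one must argue that lowering a bid can only lower the weight of its own path and hence only helps it win, as in Lemma~\ref{lem:path_auction_DSIC}); and the constraint $\gamma\leq n^{1/3}$ is used inside the consistency case to show $\tfrac{\gamma\eta^2}{n}\leq 1$ so that $\predL$ actually wins, not merely to make $f(\eta)$ well-behaved at the jump. Your robustness sketch (weight ratio $n/\gamma$ times at most $n$ winning edges) is essentially the paper's argument and is fine.
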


\begin{algorithm}[tb]
	\caption{\;Frugal Path Auction with Predictions}
	\label{alg:path_auction}
	\textbf{Input}: A graph $G$, the predicted edge costs $\predC=\{\predc_e\}_{e\in E}$, the bids $B=\{b_e\}_{e\in E}$ and a parameter $\gamma \in [1,n^{1/3}]$. \\
	\textbf{Output}: The winner and the payments.  
	
	\begin{algorithmic}[1] 
		\STATE Say path $\predL$ is the path with the minimum predicted cost.
		\STATE For any $s$-$t$ path $L' \neq \predL$, set its weight $w(L')\leftarrow 1$.
		\IF { $\forall e \in \predL, b_e \leq \gamma \predc_e$ }
		\STATE Set $w(\predL)\leftarrow \gamma / n$.
		\ELSE
		\STATE Set $w(\predL)\leftarrow n / \gamma$.
		\ENDIF
		\STATE \textbf{return} the path $L$ with the minimum weighted bid and the threshold bid $\theta(e)$ of each edge $e\in L$.
	\end{algorithmic}
\end{algorithm}

\subsection{Mechanism}

The basic intuition is the same as the single-item auction problem. That is, we set a bar for the path with the minimum predicted cost and treat it differently when the reported cost is below or above the bar. Recollect that in single-item auction, if $b_1$ is below the bar, the mechanism ignores bidder 1 directly. 
However, things are different in path auction. 

Ignoring a path may lead to an unbounded frugal ratio. For example, consider a graph with three edge-disjoint $s$-$t$ paths. Their costs are $1$,$1$ and $\infty$ respectively. Say we ignore the first path. Then any truthful mechanism has to pay $\infty$, but the second cheapest path cost is only $1$. Hence, the frugal path mechanism needs a more careful design after setting up a bar. 
When the bid of the predicted cheapest path is far larger than the predicted cost,
instead of ignoring it directly, we assign a large weight to it so that the path is still a candidate but less competitive. The mechanism is described in \Cref{alg:path_auction}.

\subsection{Analysis}

The analysis starts by showing the truthfulness.

\begin{lemma}\label{lem:path_auction_DSIC}
	\Cref{alg:path_auction} is a truthful mechanism.
\end{lemma}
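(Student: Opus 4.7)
The plan is to invoke the reverse analog of \Cref{thm:myerson}: because the path auction is single-parameter (each agent holds a scalar cost $\optc_e$) and \Cref{alg:path_auction} pays each winning edge its threshold bid $\theta(e)$ by construction, truthfulness reduces to showing that the allocation rule is monotone in the reverse sense---a winning edge remains a winner whenever it lowers its bid unilaterally with the other bids fixed.

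To verify this I would fix a winning edge $e$ in the output path $L$, drop its bid from $b_e$ to some $b'_e < b_e$, and split on whether $e \in \predL$. Since $G$ is a disjoint union of $s$-$t$ paths, $e \in \predL$ iff $L = \predL$. In the case $e \notin \predL$, the predicate ``$\forall e' \in \predL,\ b_{e'} \leq \gamma \predc_{e'}$'' is unaffected, so $w(\predL)$ and the weighted cost of $\predL$ stay the same; other non-$L$ paths keep weight $1$ and identical bids; and the weighted cost of $L$ (which has $w(L)=1$) strictly drops by $b_e - b'_e$, so $L$ remains the minimizer. In the case $e \in \predL$, we have $L = \predL$; the sum $\sum_{e' \in \predL} b_{e'}$ strictly decreases by $b_e - b'_e > 0$, and the weight $w(\predL)$ can only shift from $n/\gamma$ to $\gamma/n$ (since lowering $b_e$ can only make ``$b_e \leq \gamma \predc_e$'' easier to satisfy, never harder). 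Both effects strictly lower the weighted cost of $\predL$, while the weighted cost of every other path is untouched, so $\predL$ is still the minimizer.

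The main obstacle---the only step that is not essentially mechanical---is the monotonicity of $w(\predL)$ under a unilateral bid decrease in the second case: one has to check that the weight cannot jump \emph{upward}. The asymmetric choice of the predicate ``$b_e \leq \gamma \predc_e$'' in \Cref{alg:path_auction} is exactly what rules this out, since the ``favored'' weight $\gamma/n$ is assigned under a one-sided threshold that becomes easier, not harder, to satisfy as bids drop. Everything else---checking that the declared payment equals the threshold bid, and the first case of the monotonicity argument---is routine.
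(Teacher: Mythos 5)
Your proposal is correct and follows essentially the same route as the paper: the paper's proof also reduces truthfulness via Myerson's lemma to monotonicity of the allocation rule, and its key observation is precisely the one you isolate, namely that a path's weight can never increase when one of its edges lowers its bid, so a unilateral bid decrease only makes the winning path more competitive. Your case split on $e \in \predL$ versus $e \notin \predL$ just makes explicit the detail the paper states in one line.
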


\begin{proof}
	The guarantee of the monotonicity is given by the weight function. For any edge, the weight of its path will not increase as its bid decreases. Thus, decreasing its bid only makes it more competitive, implying that the allocation rule is monotone. Due to \Cref{thm:myerson}, this lemma can be proved.
\end{proof}

For the ratio analysis, similar with the previous section, we distinguish two cases: (1) $\eta \leq \gamma$ and (2) $\eta > \gamma$. The intuition is that given a small $\eta$, we can ensure that $\predL$ always has a small weight.

\begin{lemma}\label{lem:path_auction_small_eta}
	The frugal ratio of \Cref{alg:path_auction} is at most $\gamma(1+\eta)$ if $\eta \leq \gamma$.
\end{lemma}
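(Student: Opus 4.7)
The plan is to exploit truthfulness to assume $b_e = \optc_e$ everywhere, and then show that when $\eta \le \gamma \le n^{1/3}$ the mechanism not only returns $\predL$ but also pays each of its edges at most $\gamma \predc_e + (\gamma/n)\,S$, where $S := \min_{L' \neq \predL} c(L')$. Summing and comparing $\predc(\predL)$ and $S$ to $S(\cI)$ will produce the claimed ratio.

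First I would verify that the mechanism actually selects $\predL$. Since $b_e = \optc_e \le \eta \predc_e \le \gamma \predc_e$ for every $e \in \predL$, the if-branch of~\Cref{alg:path_auction} fires and $w(\predL) = \gamma/n$ while every other path keeps weight $1$. Comparing weighted bids, $\predL$ scores $(\gamma/n)\,c(\predL) \le (\gamma\eta/n)\,\predc(\predL)$, while any competitor $L' \neq \predL$ scores $c(L') \ge \predc(L')/\eta \ge \predc(\predL)/\eta$. The assumption $\gamma \le n^{1/3}$ together with $\eta \le \gamma$ gives $\gamma \eta^2 \le \gamma^3 \le n$, which is exactly what is needed to conclude $(\gamma\eta/n)\predc(\predL) \le \predc(\predL)/\eta$, so $\predL$ is the minimum-weighted-bid path.

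The heart of the proof is the threshold computation for each $e \in \predL$. Writing $T := c(\predL) - \optc_e$, as the hypothetical bid $b$ of $e$ grows the weight of $\predL$ jumps from $\gamma/n$ to $n/\gamma$ exactly when $b$ crosses $\gamma \predc_e$, so I would split into a \emph{light regime} ($b \le \gamma\predc_e$, in which $\predL$ wins iff $b \le (n/\gamma)S - T$) and a \emph{heavy regime} ($b > \gamma\predc_e$, in which $\predL$ wins iff $b \le (\gamma/n)S - T$). A three-way case analysis on the relative positions of $\gamma\predc_e$, $(\gamma/n)S - T$, and $(n/\gamma)S - T$ shows in each case that
\[
\theta(e) \;\le\; \max\!\Bigl\{\gamma \predc_e,\; \tfrac{\gamma}{n} S\Bigr\} \;\le\; \gamma \predc_e + \tfrac{\gamma}{n} S.
\]
This enumeration — especially making sure the heavy regime is handled correctly when its feasible interval is empty — is the one delicate step; everything else is bookkeeping.

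Summing over $e \in \predL$ and using $|\predL| \le n$ yields a total payment at most $\gamma \predc(\predL) + \gamma S$. Finally I would bound each term against $S(\cI)$: at least one of the two real-cheapest $s$-$t$ paths, call it $L$, is different from $\predL$, so $\predc(\predL) \le \predc(L) \le \eta\, c(L) \le \eta\, S(\cI)$, and $S \le S(\cI)$ since the real cheapest among paths other than $\predL$ has cost no greater than the second cheapest overall. Plugging these in gives $\sum_{e \in \predL}\theta(e) \le \gamma(1+\eta)\,S(\cI)$, as claimed.
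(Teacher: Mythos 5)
Your proposal is correct and follows essentially the same route as the paper's proof: establish that $\predL$ wins via the $\gamma\eta^2\le n$ inequality, bound each edge's threshold by $\gamma\predc_e$ in the light regime and by $(\gamma/n)S - T \le (\gamma/n)S$ in the heavy regime (your case analysis matches the paper's $Q_1/Q_2$ split), then sum and compare $\predc(\predL)$ and $S$ to the second-cheapest cost. No gaps.
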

\begin{proof}
	Use $L'$ to denote the minimum cost path other than $\predL$. Clearly, path $L'$ has a cost at most the second cheapest path's cost $S$. Namely, $c(L') \leq S$. Observe that if $\eta \leq \gamma$, for each edge $e\in \predL$, \[b_e = \optc_e \leq \eta \predc_e \leq \gamma \predc_e.\] Hence, $w(\predL)=\gamma / n$.
	We show that $\predL$ will always win.
	
	\begin{equation*}
		\begin{aligned}
		w(\predL) b(\predL) &= \frac{\gamma}{n}\optc(\predL) \leq \frac{\gamma\eta}{n}\predc(\predL) \\
		& \leq \frac{\gamma\eta}{n}\predc(L') \;\;\;\; \mbox{[the definition of $\predL$]}\\
		& \leq \frac{\gamma\eta^2}{n}\optc(L') \\
		 & \leq w(L')b(L') \;\;\;\; \mbox{[$\eta \leq \gamma$ and $\gamma \leq n^{1/3}$].}\\
		\end{aligned}
	\end{equation*}
	
	Since $L'$ is the minimum cost path other than $\predL$, the weighted bid of $\predL$ is the minimum. Now we analyze the threshold bid of each edge in $\predL$. For each edge $e\in \predL$, if it loses when its bid is larger than $\gamma \predc_e$, its threshold bid $\theta(e) \leq \gamma \predc_e$. Otherwise, $\theta(e) > \gamma \predc_e$. When this edge increases the bid to $\theta(e)$, the weight of path $\predL$ becomes $n/\gamma$. Thus, 
	\[w(\predL)b(\predL) = \frac{n}{\gamma}(\optc(\predL)-\optc_e+\theta(e)). \]
	Since $\predL$ wins iff its weighted bid is at most $\optc(L')$, we have
	\[ \theta(e) = \frac{\gamma}{n} \optc(L') - \optc(\predL) + \optc_e. \]
	
	Use $Q_1\subseteq \predL$ and $Q_2\subseteq \predL$ to denote the edges in the above two cases respectively. We are ready to bound the frugal ratio.
	\begin{equation*}
		\begin{aligned}
			\fr &= \frac{\sum_{e\in E} p_e}{S} \leq \frac{\sum_{e\in \predL} \theta(e)}{\optc(L')}\\
			&\leq \frac{\sum_{e\in Q_1} \gamma \predc_e + \sum_{e\in Q_2} (\frac{\gamma}{n} \optc(L') - \optc(\predL) + \optc_e) }{\optc(L')} \\
			& \leq \frac{\gamma \predc(\predL) + \gamma \optc(L')}{\optc(L')} \leq \frac{\gamma \predc(L') + \gamma \optc(L')}{\optc(L')}\\
			& \leq \gamma(1+\eta).
		\end{aligned}
	\end{equation*}
	
\end{proof}

\begin{lemma}\label{lem:path_auction_large_eta}
	The frugal ratio of \Cref{alg:path_auction} is at most $n^2/\gamma$ if $\eta > \gamma$.
\end{lemma}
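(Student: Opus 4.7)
The plan is to bound the threshold payment $\theta(e)$ of each edge $e$ in the winning path $L$ by $(n/\gamma)\, S$, and then sum over the $|L|\le n$ edges to obtain the claimed frugal ratio $n^2/\gamma$. Throughout I may assume truthful bids $b_e = \optc_e$ by \Cref{lem:path_auction_DSIC}, and let $S$ denote the cost of the second cheapest $s$--$t$ path.

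Fix $e \in L$ and set $m := \min_{L' \ne L} w(L')\, b(L')$. Because $\theta(e)$ is the maximum $b_e$ for which $L$ is still declared the winner, we must have $w(L)\cdot b(L)|_{b_e=\theta(e)} \le m$. All weights assigned by \Cref{alg:path_auction} lie in $\{1,\gamma/n,n/\gamma\}$, and $\gamma/n \le 1$ since $\gamma \le n^{1/3}$; hence $w(L) \ge \gamma/n$ regardless of which branch of the algorithm is active, and regardless of whether $w(\predL)$ has jumped between the two possible values as $b_e$ varies. This gives the uniform inequality $\theta(e) \le b(L)|_{b_e=\theta(e)} \le (n/\gamma)\, m$.

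The remainder is a case analysis on whether $L = \predL$. If $L = \predL$, every competing path has weight $1$, so $m = \min_{L' \ne L} c(L')$, which is at most $S$ (it equals $S$ if $L$ is the unique cheapest path, and otherwise is at most the cheapest path's cost, itself at most $S$); plugging in gives $\theta(e) \le (n/\gamma)\, S$. If $L \ne \predL$, then $w(L) = 1$ yields the sharper $\theta(e) \le m$, so it suffices to show $m \le (n/\gamma)\, S$. When $c(\predL) \le S$, I bound $m \le w(\predL)\, c(\predL) \le (n/\gamma)\, S$ directly. When $c(\predL) > S$, the path $\predL$ is not among the two cheapest, so by pigeonhole at least one of those two cheapest paths lies outside $\{L,\predL\}$; that path has weight $1$ and cost at most $S$, forcing $m \le S \le (n/\gamma)\, S$. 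A quick sanity check shows the degenerate case of only two $s$--$t$ paths automatically falls into $c(\predL) \le S$, so no subcase is missed.

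Summing over $e \in L$ gives $\sum_{e \in L} \theta(e) \le n \cdot (n/\gamma)\, S = (n^2/\gamma)\, S$, whence $\fr \le n^2/\gamma$. The main obstacle is the subcase $L \ne \predL$ with $w(\predL) = n/\gamma$, where naively the minimum weighted bid among competitors could be inflated by a factor $n/\gamma$; the pigeonhole step above is precisely what prevents that inflation from propagating, because $\predL$ being uncompetitive automatically supplies a cheap weight-$1$ alternative that controls $m$.
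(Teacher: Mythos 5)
Your proof is correct and follows essentially the same route as the paper's: bound each winning edge's threshold bid by $(n/\gamma)$ times the minimum competing weighted bid, then show that quantity is at most $(n/\gamma)S$ via a short case analysis on $\predL$. Your case split (on whether $L=\predL$ and on whether $c(\predL)\le S$) is organized slightly differently from the paper's split on the value of $w(\predL)$, and is if anything a bit more careful about the degenerate identifications of paths, but the underlying argument is the same.
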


\begin{proof}
	Say path $L$, $L'$ and $L''$ are, respectively, the minimum weighted bid path, the second minimum weighted bid path and the second minimum cost path. 
	Recall that the frugal ratio is $(\sum_{e\in E}p_e)/c(L'')$. We distinguish two cases based on the weight of path $\predL$.

	In the first case that $w(\predL) =\gamma/n$, \[w(L') \optc(L') \leq \optc(L'').\]
	For any edge $e\in L$, we can always bound its threshold bid: \[\theta(e) \leq (n/\gamma) w(L')\optc(L').\] Thus,
	\[ \sum_{e\in E}p_e = \sum_{e\in L} \theta(e) \leq \frac{n^2}{\gamma}w(L')\optc(L') \leq \frac{n^2}{\gamma}\optc(L''). \]
	
	For the second case that $w(\predL) =n/\gamma$, since it is possible that $\predL=L''$, we have \[w(L')\optc(L') \leq (n/\gamma)\optc(L'').\] Regardless of which path wins, for any edge $e\in L$, the threshold bid $\theta(e) $ is at most $w(L')\optc(L')$. Thus,
	\[ \sum_{e\in E}p_e = \sum_{e\in L} \theta(e) \leq nw(L')\optc(L') \leq \frac{n^2}{\gamma}\optc(L'').  \]

\end{proof}

Combing the above lemmas, \Cref{thm:path_auction_main} has been proved.


Essentially, \Cref{alg:path_auction} can be viewed as a variant of the VCG path mechanism which incorporates predictions. Besides the VCG path mechanism, there is another famous mechanism for path auction called $\sqmechanism$-mechanism~\cite{DBLP:conf/focs/KarlinKT05}. Leveraging the idea of $\sqmechanism$-mechanism can improve the robustness slightly. In some graphs, the robustness ratio decreases by a factor of $\sqrt{n}$. See more details in \Cref{sec:path_auction_sqrt}.

\subsection{A Lower Bound}
We give a lower bound to show the optimality of \Cref{alg:path_auction} when $\eta$ is small.

\begin{theorem}\label{thm:path_auction_bound}
	For any deterministic truthful mechanism, the frugal ratio is $\Omega(\eta^2)$ given any $\eta=O(\sqrt{n})$.
\end{theorem}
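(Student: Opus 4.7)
The plan is to show that at the regime $\gamma \approx \eta$ the upper bound $\gamma(1+\eta)$ of \Cref{lem:path_auction_small_eta} is essentially tight, by constructing, for any $\eta = O(\sqrt{n})$, an instance with prediction error $\eta$ on which every deterministic truthful mechanism suffers frugal ratio $\Omega(\eta^2)$. The natural setting is a graph where some $s$-$t$ path has $\Theta(\eta^2)$ edges, since that is where the payment of a ``winner'' path can be blown up by that factor relative to the second cheapest.

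A concrete attempt is to take a graph with two edge-disjoint $s$-$t$ paths: $P_1=\{e\}$ a single edge and $P_2=\{f_1,\ldots,f_k\}$ a long path with $k=\Theta(\eta^2)$ edges, giving $n=k+1=\Theta(\eta^2)$ agents. I would choose the predictions so that $P_1$ and $P_2$ have comparable predicted total cost (for instance $\predc(e)=k$ and each $\predc(f_i)=1$) and then design a family of bid profiles, each with prediction error at most $\eta$, in which the two paths trade the role of ``true cheapest'' depending on how the adversary scales the edges by a factor of $\eta$ up or $1/\eta$ down. By \Cref{thm:myerson}, the threshold bids of the winning path are fully determined by the allocation rule, and by a monotonicity argument analogous to \Cref{obs:lower_bound}, the mechanism's choice on one instance constrains its behavior on the scaled instance: whichever path the mechanism selects, one of the related instances forces a total payment $\Omega(\eta^2)$ times the second-cheapest path.

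The main obstacle is coordinating the multi-edge threshold-bid argument, since each of the $k$ edges in $P_2$ contributes to the total payment and monotonicity must be tracked across all of them simultaneously as the adversary perturbs the bids. This parallels the classical $\Omega(n)$ frugal-path lower bound of Elkind--Sahai--Steiglitz, and I would adapt that proof by choosing the path length to be exactly $\Theta(\eta^2)$ and verifying that the prediction error across the entire construction stays bounded by $\eta$. A secondary concern is ruling out that the mechanism uses the predictions to escape the classical lower bound; this would follow by arguing that the predictions we fix are either symmetric with respect to the two paths or, if they favor one path, are precisely the ``misleading'' ones on the hard instance in the family, so the lower bound remains intact regardless of how the allocation rule incorporates $\predC$.
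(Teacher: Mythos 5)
There is a genuine gap, and it is fatal to the specific construction you propose. On a graph consisting of a single-edge path $P_1=\{e\}$ in parallel with a $k$-edge path $P_2$, no $\Omega(k)$ lower bound holds: the weighted VCG mechanism that selects the path minimizing $w(L)b(L)$ with $w(P_1)=1/\sqrt{k}$ and $w(P_2)=1$ is deterministic, monotone, hence truthful, and a short calculation shows its frugal ratio on this graph is $O(\sqrt{k})$ over \emph{all} bid profiles (if $P_1$ wins its threshold is $\sqrt{k}\,b(P_2)$, giving ratio at most $\sqrt{k}$; if $P_2$ wins each of its $k$ edges has threshold at most $b_e/\sqrt{k}-b(P_2)+b_{f_i}$ while the second-cheapest path costs $b_e$, again giving ratio at most $\sqrt{k}$). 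This is exactly the $\sqmechanism$-mechanism of Karlin--Kempe--Tamir discussed in the paper, whose advantage is precisely on graphs where one path is much shorter than the other. So with $k=\Theta(\eta^2)$ your hard family can only certify $\Omega(\eta)$, not $\Omega(\eta^2)$. The paper avoids this by taking two \emph{node-disjoint paths of equal length $n/2$} with all predicted costs equal to $1$; there even the $\sqmechanism$-mechanism is stuck at $\Theta(n)$, and the second-cheapest path costs only about $(n/2)/\eta$ while the winning path has $\Theta(n)$ edges over which payment can accumulate.

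The second missing piece is the mechanism-independent argument you defer to "a monotonicity argument analogous to \Cref{obs:lower_bound}": you correctly identify coordinating $k$ threshold bids as the main obstacle but do not resolve it. The paper's resolution is a counting argument: for each pair $(e,e')\in L_1\times L_2$ declare one a "competitor" of the other according to which path the mechanism selects when $b_e=b_{e'}=\eta$ and all other bids are $1/\eta$; since every one of the $|L_1|\cdot|L_2|=n^2/4$ pairs contributes, some edge $e$ has at least $n/4$ competitors, and setting $\optc_e=\eta$ and all other true costs to $1/\eta$ (error exactly $\eta$ against the all-ones prediction) forces each of those $\geq n/4$ competitors to have threshold at least $\eta$ by monotonicity. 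This yields payment $\geq n\eta/4$ against a second-cheapest cost of $(n/2-1)/\eta+\eta$, which is where the hypothesis $\eta=O(\sqrt n)$ enters. Note also that this argument works for every $n$ and every $\eta=O(\sqrt n)$ simultaneously, whereas your construction ties $n$ to $\Theta(\eta^2)$ and would at best prove the statement for that single relationship between the parameters.
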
 

\begin{proof}
	Consider any deterministic truthful mechanism $\cM$ and any $\eta\geq 1$. Construct a graph $G=(V,E)$ consisting of only two node-disjoint $s$-$t$ paths $L_1,L_2$. The two paths share the same length. Thus, $|V|=|E|=n$ and $|L_1|=|L_2|=n/2$. The predicted cost of each edge is $1$.
	
	For each edge $e\in L_1$ (resp. $L_2$), say another edge $e'\in L_2$ (resp. $L_1$) is its competitor if $\cM$ will let $e'$ win when $b_e=b_{e'}=\eta$ and $b_{e''}=1/\eta$ $\forall e\in E\setminus \{e,e'\}$. Define edge set $T(e)$ to be the competitors of edge $e$. Noticing that for any pair $(e,e')\in L_1\times L_2$, either $e\in T(e')$ or $e'\in T(e)$, we have \[\sum_{e\in E} |T(e)| = |L_1|\cdot |L_2| = \frac{n^2}{4}.\] Thus, we can always find an edge $e$ with $|L(e)|\geq n/4$. Without loss of generality, assume $e\in L_1$.
	
	Set $\optc_e=\eta$ and $\optc_{e'} = 1/\eta$ for any other edge. 
	The mechanism has to let $L_2$ win due to the monotonicity. For each edge $e'\in T(e)\subseteq L_2$, the threshold bid is at least $\eta$ because it can still win when its bid increases to $\eta$. Thus, the payment is at least $|T(e)|\eta \geq n\eta /4$. The frugal ratio can be bounded as follows:
	\[\fr = \frac{\sum_{e\in E}p_e}{\optc(L_1)} \geq \frac{\frac{1}{4} n\eta }{\frac{1}{\eta}(\frac{n}{2}-1) + \eta} = \frac{n}{2n-1+\eta^2} \eta^2.\]
	Clearly, if $\eta=O(\sqrt{n})$, the frugal ratio is $\Omega(\eta^2)$.
\end{proof}

 When $\eta=\gamma-\epsilon$ for a tiny positive $\epsilon$, the frugal ratio of \Cref{alg:path_auction} approaches the lower bound. 


\section{Truthful Job Scheduling}\label{sec:scheduling}

\begin{algorithm}[tb]
	\caption{\;Truthful Job Scheduling with Predictions}
	\label{alg:scheduling}
	\textbf{Input}: Machines $M$, jobs $J$, the bid matrix $\{b_{ij}\}_{(i,j)\in M\times J}$, the predicted matrix $\{\predt_{ij}\}_{(i,j)\in M\times J}$ and a parameter $\gamma \in [1,m]$. \\
	\textbf{Output}: The allocation matrix $X$ and the payment matrix $P$.  
	
	\begin{algorithmic}[1] 
	    \STATE Construct a new processing time matrix $\barT$ as follows.
	    \FOR{each job $j\in J$}
	    \STATE Define $\predmint_j := \min_{i\in M} \predt_{ij}$.
	    \STATE For each machine $i\in M$,\\ \;\;\; set $\bart_{ij} \leftarrow \min \{ \predt_{ij},      (m/\gamma) \cdot \predmint_j \}$.
	    \ENDFOR
	    \STATE Compute the optimal allocation $\barX$ of the scheduling instance $\barT$.
	    \FOR{each job $j\in J$}
	    \STATE Use $k$ to denote the machine with $\barx_{kj}=1$.
	    \IF{ $\predt_{kj} > \bart_{kj}$}
	    \STATE Call job $j$ a greedy job.
	    \STATE Define $l:= \arg \min_{i\in M} \predt_{ij}$ and set up a weight function: $w_{lj} \leftarrow \gamma / m$ and $w_{ij} \leftarrow 1$ $\forall i \neq l$.
	    \ELSE
	    \STATE Call job $j$ a non-greedy job.
	    \STATE Set up a weight function: $w_{kj} \leftarrow \gamma^2 / m^2$ and $w_{ij} \leftarrow 1$ $\forall i \neq k$.
	    \ENDIF
	    \STATE Let the machine $z$ with the minimum weighted bid win and pay the threshold bid $\theta_{zj}$ for job $j$. Namely, set $x_{zj}\leftarrow 1, p_{zj} \leftarrow \theta_{zj}$ and $x_{ij}, p_{ij} \leftarrow 0$ $\forall i \neq z$.
	    \ENDFOR
		\STATE \textbf{return} $X=\{x_{ij}\}_{(i,j)\in M\times J}$ and $P=\{p_{ij}\}_{(i,j)\in M \times J}$.  
	\end{algorithmic}
\end{algorithm}

In this section, we consider the truthful job scheduling on unrelated machines. In the problem, there are $n$ jobs $J$ and $m$ machines $M$.  
Each machine $i$ is owned by a selfish agent, and the processing time $\optt_{ij}$ of each job $j$ on machine $i$ is a private value known only to the agent.
Without loss of generality, assume that $t_{ij}>0$. 
Each job has to be assigned to exactly one machine.
The mechanism will ask each machine $i$ to report a processing time $b_{ij}$ for each job $j$, and then allocate jobs. Once job $j$ is assigned to machine $i$, the mechanism pays the machine (agent) $p_{ij}$. Use $X=\{x_{ij}\}_{(i,j) \in M \times J}$ to denote the job allocation matrix, where $x_{ij}$ is $1$ if job $j$ is assigned to machine $i$, otherwise $0$. The makespan  of an allocation $X$ is the completion time of the last job, i.e., $\ms:= \max_{i\in M} \sum_{j\in J} x_{ij}\optt_{ij}$.
The utility of each machine~$i$ is defined to be $\sum_{j\in J} (p_{ij}-x_{ij}\optt_{ij})$. The goal is to design a truthful mechanism which minimizes the makespan. In mechanism design with predictions, we are given the predicted processing times $\predT=\{\predt_{ij}\}_{(i,j) \in M \times J}$. The prediction is error-prone and the prediction error $\eta:= \max_{(i,j)}\{\frac{\predt_{ij}}{\optt_{ij}}, \frac{\optt_{ij}}{\predt_{ij}} \}$.

This problem is very different from the previous two problems. First, it is not a single-parameter environment. For each machine $i$, the private value is a $n$-dimensional vector $<\optt_{i1},...,\optt_{in}>$. Second, the objective function is not directly related to the payment. 
The mechanism gives the winners money but the goal is to minimize the makespan. Thus, the payment rule in truthful scheduling is only for the purpose of ensuring the truthfulness.

\begin{theorem}\label{thm:scheduling_main}
	
	There exists a deterministic truthful mechanism parameterized by $\gamma\in [1,m]$ with approximation ratio $O(\min\{\gamma \eta^2, \frac{m^3}{\gamma^2}\})$.
\end{theorem}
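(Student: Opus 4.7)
The plan is as follows. Denote by $z_j$ the machine the mechanism actually assigns job $j$ to, by $k_j$ the machine that $\barX$ assigns it to, and (for greedy jobs) by $l_j=\arg\min_i \predt_{ij}$ the minimum-predicted-time machine. I would first establish truthfulness by noting that the mechanism is \emph{task-independent}: for each job $j$, both the allocation and the payment depend only on the bid vector $\{b_{ij}\}_{i\in M}$ for that one job (the weights $w_{ij}$ are determined by the predictions alone). For each job the allocation is monotone in $b_{ij}$ (lowering $b_{ij}$ never hurts machine $i$'s chance of winning that job) and the price is the threshold bid, so Myerson's lemma applied per job, combined with the additive decomposition of each machine's utility across jobs, gives the multi-parameter DSIC guarantee.

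For the robustness bound $O(m^3/\gamma^2)$, I would use that every weight $w_{ij}$ lies in $[\gamma^2/m^2,1]$. Comparing $z_j$ to the true-optimum machine $o_j$ for job $j$ gives $w_{z_jj}\,\optt_{z_jj}\le w_{o_jj}\,\optt_{o_jj}\le \optt_{o_jj}$, hence $\optt_{z_jj}\le (m^2/\gamma^2)\,\optt_{o_jj}$. Summing over $j$ and bounding $\sum_j \optt_{o_jj}\le m\cdot \opt$ yields total algorithmic load at most $(m^3/\gamma^2)\opt$, which upper-bounds the makespan.

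For the consistency bound $O(\gamma\eta^2)$, I would first observe that the makespan of $\barX$ on $\barT$ is at most $\eta\cdot \opt$: plugging the true optimum $X^*$ into $\barT$ and using $\bart_{ij}\le \predt_{ij}\le \eta\,\optt_{ij}$ gives $\max_i\sum_j x^*_{ij}\bart_{ij}\le \eta\cdot\opt$. The crucial technical step is a uniform per-job inequality $w_{z_jj}\,\optt_{z_jj}\le (\gamma^2/m^2)\,\eta\,\bart_{k_jj}$. For a non-greedy job this follows by comparing $z_j$ to $k_j$, which carries weight $\gamma^2/m^2$ and satisfies $\optt_{k_jj}\le \eta\,\bart_{k_jj}$. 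For a greedy job I compare $z_j$ to $l_j$ instead: $w_{z_jj}\,\optt_{z_jj}\le (\gamma/m)\,\eta\,\predmint_j$, and since a greedy job has $\bart_{k_jj}=(m/\gamma)\,\predmint_j$ by construction, the two forms coincide.

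Finally, I would bound the load on each machine $i$ by splitting $\{j:z_j=i\}$ by which weight $w_{z_jj}$ was active. For jobs with $w_{z_jj}=1$ (the algorithm deviated from the ``special'' machine), the uniform inequality gives $\optt_{z_jj}\le (\gamma^2/m^2)\,\eta\,\bart_{k_jj}$, and summing over all such jobs on $i$ is at most $(\gamma^2/m^2)\,\eta$ times the \emph{total} load of $\barX$ on $\barT$, hence at most $(\gamma^2/m)\,\eta^2\,\opt \le \gamma\eta^2\,\opt$ since $\gamma\le m$. For non-greedy loyal jobs ($z_j=k_j=i$) the load on $i$ is at most $\eta$ times the $\barX$-load on $i$, so at most $\eta^2\,\opt$. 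For greedy loyal jobs ($z_j=l_j=i$) the bound $\optt_{z_jj}\le (\gamma/m)\,\eta\,\bart_{k_jj}$ telescopes through the total $\barX$ load to $\gamma\eta^2\,\opt$ as well. Summing the three contributions and then taking the minimum with the robustness bound completes the proof. The main obstacle is forging the uniform per-job inequality so that the greedy and non-greedy cases produce the same algebraic form; this is exactly why the algorithm splits on whether $\predt_{k_jj}>\bart_{k_jj}$ and uses the two distinct small weights $\gamma/m$ and $\gamma^2/m^2$, allowing loads reassigned away from $k_j$ to still be charged to $\barX$'s total load, which is in turn controlled by $\eta\,\opt$.
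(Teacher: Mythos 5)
Your proposal is correct, and for two of the three ingredients (truthfulness via the per-job decomposition, and the $m^3/\gamma^2$ robustness bound via $w_{ij}\in[\gamma^2/m^2,1]$) it matches the paper's argument essentially verbatim; the only cosmetic difference in the robustness part is that you charge job $j$ against the machine it uses in the true optimum while the paper charges against $\optmint_j=\min_i \optt_{ij}$, both of which sum to at most $m\cdot\opt$. Where you genuinely diverge is the $O(\gamma\eta^2)$ bound. The paper splits on whether $\eta\le\sqrt{m/\gamma}$: in the small-$\eta$ regime it argues that the allocation coincides with the allocation $\predX$ on predicted inputs and then transfers the exact-consistency bound $1+(1-\frac{1}{m})\gamma$ through two factors of $\eta$, while in the large-$\eta$ regime it bounds the deviating jobs crudely by $\sum_j\optmint_j\le m\cdot\ms(\optT)\le\gamma\eta^2\cdot\ms(\optT)$. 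You instead prove a single per-job inequality $w_{z_jj}\,\optt_{z_jj}\le(\gamma^2/m^2)\,\eta\,\bart_{k_jj}$ valid for all $\eta$ — engineered so that the greedy case (charged through $l_j$ and $\predmint_j=(\gamma/m)\bart_{k_jj}$) and the non-greedy case (charged through $k_j$ directly) land on the same right-hand side — and then charge each machine's load to $\barX$'s total load on $\barT$, which is at most $m\cdot\ms(\barT)\le m\eta\cdot\ms(\optT)$. This yields the same $(1+2\gamma)\eta^2$ bound without any case analysis on $\eta$ and without needing the claim that the realized allocation equals $\predX$ for small $\eta$; the price is a marginally worse constant at $\eta=1$ ($1+2\gamma$ versus the paper's $1+(1-\frac1m)\gamma$), which is immaterial to the $O(\cdot)$ statement. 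Both routes are sound; yours is the more uniform charging argument, the paper's isolates a sharper exact-consistency lemma.
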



\subsection{Mechanism}

We deal with each job independently. Thus, each job can be viewed as a single-item auction, and we can apply our technique without hurting the truthfulness. 
Similar with the previous two mechanisms, we retain the same job allocation when $\eta$ is small, in order to build the consistency and the error tolerance. We first compute the optimal assignment for the predicted instance, and set up a weight function of machines such that selecting the machine with the minimum weighted bid for each job gets the optimal assignment. Then we decrease the weights of some machines by a factor. This ensures that when $\eta$ is small, these machines still obtain the minimum weighted bids, and thereby the job assignment remains unchanged. For the robustness guarantee, we observe that if the ratio between any two machines' weights is at most $c$, the approximation ratio of the mechanism must be bounded by $c\cdot m$. Thus, we round the predicted instance such that after rounding, the ratio between any two machines' weights is bounded. 

Notice that using the rounded weights directly may hurt the consistency because the processing time of a job could be different in the predicted instance and the rounded instance. To maintain a good consistency and robustness, we only use the rounded weights to guide the jobs that share the same processing time in both of the two instances. For all other jobs, we set up their weights such that they can be assigned greedily when $\eta=1$. We describe the mechanism in~\Cref{alg:scheduling}.

\subsection{Analysis}



For simplicity, we assume that no two machines have the minimum weighted bid in each iteration. This assumption can be removed easily by giving priority to machine $k$ when breaking ties. We first show the truthfulness.

\begin{lemma}\label{lem:scheduling_DSIC}
\Cref{alg:scheduling} is a truthful mechanism.
\end{lemma}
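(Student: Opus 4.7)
The plan is to reduce truthfulness of this multiparameter mechanism to $n$ independent single-parameter arguments, one per job, then invoke Myerson's Lemma (Theorem~\ref{thm:myerson}) in its reverse-auction form for each.

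The key structural observation I would establish first is that the weight matrix $\{w_{ij}\}_{(i,j)\in M\times J}$ is a function of the prediction matrix $\predT$ alone: the rounded matrix $\barT$, the optimal allocation $\barX$ computed from $\barT$, the greedy/non-greedy classification of each job, the choice of machine $k$, and the choice of machine $l$, are all determined in lines~1--15 without ever reading the bid matrix. Hence, from the perspective of bid manipulation, the weights $w_{ij}$ are exogenous constants. Next I would note that each job $j$ is assigned to the machine $z(j) = \arg\min_{i\in M} w_{ij} b_{ij}$ and paid only $p_{z(j),j}$, with $p_{ij}=0$ for $i\neq z(j)$, so machine $i$'s total utility decomposes as $\sum_{j\in J}\bigl(p_{ij}-x_{ij}\optt_{ij}\bigr)$, and each summand depends only on $b_{ij}$ (together with the fixed weights and the other machines' bids on job $j$). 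Consequently, it suffices to show that for every job $j$ separately, machine $i$ cannot improve its $j$-th summand by misreporting $b_{ij}$.

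Fixing a job $j$ and a machine $i$, the rule ``$i$ wins iff $w_{ij} b_{ij} < \min_{i'\neq i} w_{i'j} b_{i'j}$'' is monotone in $b_{ij}$ in the reverse-auction sense: lowering $b_{ij}$ lowers the weighted bid $w_{ij}b_{ij}$ while leaving the competing weighted bids unchanged, so $i$'s winning set is a downward-closed interval of its own bid. The payment $\theta_{ij}$ defined in the algorithm is precisely the threshold (supremum) of that interval, namely $\theta_{ij}=\min_{i'\neq i} w_{i'j} b_{i'j}/w_{ij}$. By the reverse-auction form of Theorem~\ref{thm:myerson} applied to the one-dimensional parameter $b_{ij}$, truthful reporting $b_{ij}=\optt_{ij}$ is dominant for this per-job subproblem: if $i$ wins, it is paid $\theta_{ij}\ge \optt_{ij}$, so its utility is non-negative; if $i$ loses, any deviation that turns it into a winner would require bidding above $\theta_{ij}$ and thus make $p_{ij}<\optt_{ij}$, strictly worsening utility. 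Summing over $j$ gives dominant-strategy truthfulness and non-negative total utility, which is the definition of truthful (Definition~\ref{def:DSIC}).

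The step I expect to be most delicate is the first one: convincing the reader that the coupling across jobs introduced by the use of $\barX$ (which is the optimal allocation of the \emph{joint} scheduling instance $\barT$) does not, in fact, couple anything that matters for incentives, precisely because $\barT$ is computed solely from $\predT$. Once that independence is nailed down, the remaining argument is a clean, per-job application of Myerson; I would therefore devote the bulk of the write-up to stating that exogeneity carefully, perhaps as a short preliminary observation, and handle the monotonicity-plus-threshold step in one line.
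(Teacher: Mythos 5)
Your proposal is correct and follows essentially the same route as the paper: the paper's proof likewise views the mechanism as $n$ independent per-job weighted VCG auctions, which works precisely because the weights $w_{ij}$ are determined by the predictions alone. Your write-up just makes explicit the exogeneity of the weights and the per-job monotonicity-plus-threshold argument that the paper leaves implicit.
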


\begin{proof}
Noticing that the mechanism assigns each job independently, we can view \cref{alg:scheduling} as $n$ independent single-item auctions, where $n$ is the number of jobs. In each auction, the auctioneer use the weighted VCG mechanism. Thus, machines report their real private values for each job, and the whole mechanism is truthful.
\end{proof}

We start the approximation ratio analysis by proving the consistency.

\begin{lemma}\label{lem:scheduling_consistency}
The approximation ratio of \Cref{alg:scheduling} is at most $1+(1-\frac{1}{m})\gamma$ if $\eta = 1$.
\end{lemma}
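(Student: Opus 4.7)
The plan is to determine the explicit allocation that \Cref{alg:scheduling} produces when $\eta=1$ (so bids, predictions, and true processing times all agree) and then bound the load on each machine under that allocation. First I would check the two routing claims: every non-greedy job $j$ goes to its $\barX$-machine $k_j$, and every greedy job $j$ goes to $l_j:=\arg\min_i \predt_{ij}$. For a non-greedy job, $\predt_{k_j j}=\bart_{k_j j}\le (m/\gamma)\predmint_j$, so machine $k_j$'s weighted bid is at most $(\gamma/m)\predmint_j$, which is beaten by every other $\predt_{ij}\ge \predmint_j$ using $\gamma\le m$. For a greedy job, $l_j$ has weighted bid $(\gamma/m)\predmint_j\le \predt_{ij}$ by the same reasoning.

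The next ingredient — and the source of the $(1-1/m)$ saving — is the observation that \emph{$k_j\neq l_j$ whenever $j$ is greedy}. Otherwise, $\predt_{k_j j}=\predmint_j$ would force $\bart_{k_j j}=\min\{\predmint_j,(m/\gamma)\predmint_j\}=\predmint_j=\predt_{k_j j}$, contradicting the definition of greedy.

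To conclude I would fix a machine $i$ and split its load into non-greedy and greedy parts. The non-greedy load on $i$ equals $\sum_{j: k_j=i,\,j\text{ non-greedy}}\bart_{ij}$, a sub-sum of $\barX$'s load on $i$ in the instance $\barT$, hence at most $\opt_{\barT}$. For the greedy load, every contributing job $j$ has $l_j=i$ but $k_j\neq i$ by the observation above, and each such job contributes $\bart_{k_j j}=(m/\gamma)\predmint_j$ to $\barX$'s load on some machine \emph{other than} $i$; summing and using that the total $\barX$-load on the remaining $m-1$ machines is at most $(m-1)\opt_{\barT}$ bounds the greedy load on $i$ by $(1-1/m)\gamma\,\opt_{\barT}$. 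Since $\opt_{\barT}\le\opt$ (as $\bart\le\predt=\optt$ pointwise), the total load on $i$ is at most $(1+(1-1/m)\gamma)\opt$. The main thing to get right is the non-greedy routing verification, which is the single step where the particular weight ratio $\gamma^2/m^2$ has to be calibrated against the assumption $\gamma\le m$; the rest is routine load accounting against $\opt_{\barT}$.
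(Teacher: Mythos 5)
Your proposal is correct and follows essentially the same route as the paper's proof: verify that non-greedy jobs retain their $\barX$-assignment and greedy jobs go to their predicted-fastest machine, then charge the non-greedy load on each machine to $\ms(\barT)$ and the greedy load to $(\gamma/m)$ times the $\barX$-load on the other $m-1$ machines. Your explicit observation that $k_j\neq l_j$ for greedy jobs is a point the paper uses only implicitly, so making it explicit is a small improvement in rigor rather than a different argument.
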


\begin{proof}
When $\eta=1$, the prediction is perfect, i.e., $\predt_{ij}=\optt_{ij}$ $\forall (i,j)\in M\times J$. Define $\ms(T)$ to be the minimum makespan of the scheduling instance $T$. 
Then, according to the construction of $\barT$, we have $\ms(\barT)\leq \ms(\predT) = \ms(\optT)$ in the case that $\eta=1$. 

Use $A,B$ to denote the non-greedy jobs and greedy jobs respectively.
We analyze the contributions of $A$ and $B$ to the makespan separately. We first show that \Cref{alg:scheduling} assigns each non-greedy job to the same machine as in $\barX$, which indicates that the total processing time of non-greedy jobs in any machine is at most $\ms(\barT)$.

For each non-greedy job $j\in A$, use $k$ to denote the machine with 
$\barx_{kj}=1$. 
Clearly, we have $\bart_{kj}=\predt_{kj} < (m/\gamma) \cdot \predmint_j$ and $w_{kj} = \gamma^2/m^2$. For any machine $i \neq k$,
\[w_{ij}b_{ij} = \predt_{ij} \geq \predmint_j > \frac{\gamma}{m} \predt_{kj} \geq w_{kj}b_{kj}. \]
Thus, \Cref{alg:scheduling} sets $x_{kj} = \barx_{kj} = 1$ and for any machine~$k$, 
\[\sum_{j\in A} \optt_{kj}x_{kj} = \sum_{j\in A} \bart_{ij}\barx_{kj} \leq \ms(\barT).\]

For each greedy job $j\in B$, use $l$ to denote the machine with the minimum $\predt_{lj}$. Since $w_{lj} = \gamma/m$ and $\eta=1$, \Cref{alg:scheduling} assigns job $j$ to machine $l$.
Suppose that job $j$ is assigned to machine $k$ in $\barX$.
We know that $\predt_{kj} > \bart_{kj} = (m/\gamma) \cdot \predt_{lj} $ due to $j\in B$. Since $\sum_{i\in M \setminus \{l\},j\in J} \bart_{kj} \barx_{kj} \leq (m-1)\ms(\barT)  $, we have for any machine $l$,
\[\sum_{j\in B} \optt_{lj}x_{lj} = \sum_{j\in B} \frac{\gamma}{m}\bart_{kj}\barx_{kj} \leq (1-\frac{1}{m})\gamma \cdot \ms(\barT).\]
Combining the above two inequalities completes the proof.
\end{proof}

\begin{lemma}\label{lem:scheduling_robust}
The approximation ratio of \Cref{alg:scheduling} is at most $m^3/\gamma^2$.
\end{lemma}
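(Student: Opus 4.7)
The plan is to exploit the standard observation that a weighted VCG-style mechanism approximates the optimum up to a factor equal to the weight-ratio times $m$. The crucial input will be that, per job, the weights constructed by \Cref{alg:scheduling} span a bounded range.

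First I would record that the weights $w_{ij}$ depend only on the predictions and never on the reported bids, so they are fixed before bidding. For each job $j$, inspection of the two branches of the algorithm shows that $w_{ij} \in \{\gamma^2/m^2, 1\}$ if $j$ is non-greedy and $w_{ij} \in \{\gamma/m, 1\}$ if $j$ is greedy. Since $\gamma \in [1,m]$, in either case every weight lies in $[\gamma^2/m^2, 1]$, and therefore $\max_i w_{ij}/\min_i w_{ij} \leq m^2/\gamma^2$.

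Next I would set up a per-job comparison against an optimal assignment. Let $\optX$ be an optimal assignment of the true instance $\optT$ with makespan $\opt$. For each job $j$, let $i^*_j$ be its machine under $\optX$ and let $z_j$ be the machine chosen by \Cref{alg:scheduling}. By \Cref{lem:scheduling_DSIC} the agents report truthfully, so $b_{ij} = \optt_{ij}$, and the minimum-weighted-bid selection gives $w_{z_j j}\, \optt_{z_j j} \leq w_{i^*_j j}\, \optt_{i^*_j j}$. Combining with the weight-ratio bound,
\[ \optt_{z_j j} \;\leq\; \frac{w_{i^*_j j}}{w_{z_j j}}\, \optt_{i^*_j j} \;\leq\; \frac{m^2}{\gamma^2}\, \optt_{i^*_j j}. \]

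Finally I would bound the load of an arbitrary machine $z$ under the algorithm's allocation by summing the per-job inequality over the jobs it receives and then relaxing to a sum over all jobs:
\[ \sum_{j:\, z_j = z} \optt_{z_j j} \;\leq\; \frac{m^2}{\gamma^2} \sum_{j:\, z_j = z} \optt_{i^*_j j} \;\leq\; \frac{m^2}{\gamma^2} \sum_{j\in J} \optt_{i^*_j j}. \]
Rewriting the last sum as $\sum_{i\in M} \sum_{j:\, i^*_j = i} \optt_{ij} \leq m \cdot \opt$ yields a load bound of $(m^3/\gamma^2)\opt$ on every machine, hence a makespan bound of $(m^3/\gamma^2)\opt$. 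There is no serious obstacle here: the only step that requires care is verifying the bound on the weight ratio in both greedy/non-greedy branches and noting that the weights are determined by predictions alone, so the comparison between $z_j$ and $i^*_j$ is legitimate.
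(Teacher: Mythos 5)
Your proof is correct and follows essentially the same route as the paper's: bound the per-job weight ratio by $m^2/\gamma^2$, use the minimum-weighted-bid selection to compare the chosen machine's processing time to a benchmark, and then relax the per-machine load to a sum over all jobs that is at most $m$ times the optimal makespan. The only cosmetic difference is that the paper compares each job to $\min_i \optt_{ij}$ rather than to its processing time under an optimal assignment; both yield the identical final bound.
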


\begin{proof}
The critical point is the weight function.
Since $w_{ij} \in [\gamma^2/m^2,1]$ for each $(i,j)\in M\times J$, we have $w_{i_1j}/w_{i_2j} \leq m^2/\gamma^2$ for any $i_1,i_2 \in M$ and $j\in J$. Thus, for each machine $z \in M$, 
\[ \sum_{j\in J} \optt_{zj}x_{zj}  \leq \sum_{j\in J} \frac{m^2}{\gamma^2} \optmint_j x_{zj} \leq \frac{m^2}{\gamma^2} \sum_{j\in J} \optmint_j. \]

Noticing that $\frac{1}{m}\sum_{j\in J} \optmint_j$ is a lower bound of the minimum makespan, we have $ \sum_{j\in J} \optt_{zj}x_{zj} \leq \frac{m^3}{\gamma^2} \ms(\optT)$ for each machine $z\in M$.
\end{proof}

The last piece is proving the ratio $O(\gamma \eta^2)$. 

\begin{lemma}\label{lem:scheduling_eta}
The approximation ratio of \Cref{alg:scheduling} is at most $(1+2\gamma)\eta^2$.
\end{lemma}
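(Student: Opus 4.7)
The plan is to fix an arbitrary machine $z\in M$ and bound its load $L_z:=\sum_{j:\,x_{zj}=1}\optt_{zj}$ by $(1+2\gamma)\eta^2\,\ms(\optT)$; since $z$ is arbitrary, this bound then holds for the makespan. I will partition the jobs that the algorithm assigns to $z$ into four disjoint classes according to (a)~whether $j$ is non-greedy ($j\in A$) or greedy ($j\in B$), and (b)~whether the ``pivot'' machine of $j$---namely $k_j$ for non-greedy jobs and $l_j$ for greedy jobs---equals $z$, and treat each class separately.

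Two facts will be used throughout. First, since $\bart_{ij}\le \predt_{ij}\le \eta\,\optt_{ij}$ for every $(i,j)$, we have $\ms(\barT)\le \eta\,\ms(\optT)$ and the total work under $\barX$ satisfies $\sum_j \bart_{k_j j}\le m\,\ms(\barT)\le m\eta\,\ms(\optT)$. Second, for every greedy $j\in B$ the construction of $\barT$ gives $\bart_{l_j j}=\predmint_j$ and $\bart_{k_j j}=(m/\gamma)\predmint_j$, so $\predmint_j=(\gamma/m)\bart_{k_j j}$---an identity that is indispensable later.

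For case (i) ($j\in A$, $k_j=z$), I would use $\predt_{zj}=\bart_{zj}$ (which holds because $j$ is non-greedy) together with $\optt_{zj}\le \eta\bart_{zj}$; since these jobs are a subset of those that $\barX$ assigns to $z$, their total load is at most $\eta\,\ms(\barT)\le \eta^2\,\ms(\optT)$. For case (ii) ($j\in B$, $l_j=z$), I would apply $\predt_{zj}=\predmint_j$ to obtain $\optt_{zj}\le \eta\predmint_j=(\gamma/m)\eta\,\bart_{k_j j}$ via the identity; because every such $j$ satisfies $k_j\ne l_j=z$, the terms $\{\bart_{k_j j}\}$ in this sum are a subset of $\barX$'s total work, yielding a contribution of at most $(\gamma/m)\eta\cdot m\,\ms(\barT)=\gamma\eta^2\,\ms(\optT)$. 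For the remaining cases (iii) ($j\in A$, $k_j\ne z$) and (iv) ($j\in B$, $l_j\ne z$), the weight on $z$ equals $1$ and the weighted-VCG winning inequality $w_{zj}\optt_{zj}\le w_{ij}\optt_{ij}$ gives $\optt_{zj}\le(\gamma^2/m^2)\optt_{k_j j}\le(\gamma^2/m^2)\eta\,\bart_{k_j j}$ in case (iii), and $\optt_{zj}\le(\gamma/m)\optt_{l_j j}\le(\gamma/m)\eta\predmint_j=(\gamma^2/m^2)\eta\,\bart_{k_j j}$ in case (iv). Summing over the union and again invoking the total-work bound gives a combined contribution of at most $(\gamma^2/m)\eta^2\,\ms(\optT)\le \gamma\eta^2\,\ms(\optT)$ (using $\gamma\le m$).

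Summing the three contributions yields $L_z\le(1+2\gamma)\eta^2\,\ms(\optT)$. The part I expect to be the main technical obstacle is case (ii): a naive bound $\optt_{zj}\le \eta\predmint_j$ combined with $\sum_j\predmint_j\le m\eta\,\ms(\optT)$ would only produce $m\eta^2\,\ms(\optT)$, which is far too loose. The identity $\predmint_j=(\gamma/m)\bart_{k_j j}$ built into the definition of ``greedy'' is exactly what trades the factor $m$ for $\gamma$, and is the structural reason the claimed bound $O(\gamma\eta^2)$ is attainable even though the greedy scheme can load $l_j$ far beyond $\barX$'s plan.
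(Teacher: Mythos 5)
Your proof is correct, but it follows a genuinely different route from the paper's. The paper splits on the size of the error: when $\eta\le\sqrt{m/\gamma}$ it shows the returned allocation coincides with $\predX$ (the allocation under truthfully reported predictions), then invokes the consistency lemma together with a double-scaling observation to get $(1+\gamma)\eta^2$; when $\eta>\sqrt{m/\gamma}$ it additionally charges every misallocated job $j$ via $\optt_{zj}\le\optmint_j$ and uses $\sum_j\optmint_j\le m\cdot\ms(\optT)\le\gamma\eta^2\cdot\ms(\optT)$. You instead give a single uniform per-machine charging argument: every job landing on $z$ is charged, through the winning inequality $w_{zj}\optt_{zj}\le w_{ij}\optt_{ij}$ and the identity $\predmint_j=(\gamma/m)\bart_{k_jj}$ for greedy jobs, against the total work $\sum_j\bart_{k_jj}\le m\,\ms(\barT)\le m\eta\,\ms(\optT)$ of the guiding schedule $\barX$. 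This avoids both the case split on $\eta$ and the use of the consistency lemma as a black box, and it isolates exactly where the factor $\gamma$ enters (the $\gamma/m$ weight discount traded against the factor $m$ from total work). The paper's route has the advantage of reusing Lemma~\ref{lem:scheduling_consistency} and of making transparent that the allocation is literally unchanged for small $\eta$, which is the ``error tolerance'' message of the section; your route is more self-contained and arguably tighter in bookkeeping. Two cosmetic points: the equality in your case~(ii), $(\gamma/m)\eta\cdot m\,\ms(\barT)=\gamma\eta^2\,\ms(\optT)$, should be an inequality (via $\ms(\barT)\le\eta\,\ms(\optT)$); and the observation that $k_j\ne l_j=z$ in case~(ii) is not actually needed, since bounding $\sum\bart_{k_jj}$ over any subset of jobs by the total work of $\barX$ holds regardless.
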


\begin{proof}
Similar with the $\eta$ analysis in previous problems, we distinguish two cases according to the value of $\eta$: (1) $\eta \leq \sqrt{m/\gamma}$, and (2) $\eta > \sqrt{m/\gamma}$.

Use $\predX$ to denote the allocation matrix returned by \Cref{alg:scheduling} if all machines report their predicted values.
For case~(1), the mechanism returns the same allocation as $\predX$. To see this, we first observe that in $\predX$, if $\predx_{ij}=1$, $w_{ij}$ is either $\gamma/m$ or $\gamma^2/m^2$. Due to the construction of the weight function, it is easy to verify that for any of such $(i,j)$ pairs, \[\frac{m}{\gamma} w_{ij}\predt_{ij} \leq \min_{i'\neq i} w_{i'j}\predt_{i'j}.\] Thus, when $\eta \leq \sqrt{m/\gamma}$,
\begin{equation*}
    \begin{aligned}
    w_{ij}\optt_{ij} & \leq \eta \cdot w_{ij}\predt_{ij} \\
    &= \eta\cdot\frac{\gamma}{m}\cdot \min_{i'\neq i} w_{i'j} \predt_{i'j} \\
    & \leq \eta^2 \cdot \frac{\gamma}{m}\cdot \min_{i'\neq i} w_{i'j} \optt_{i'j} \\
    & \leq \min_{i'\neq i} w_{i'j} \optt_{i'j},
    \end{aligned}
\end{equation*}
implying that $x_{ij}$ is still $1$ in the current allocation.

Define $F(X, T)$ to be the makespan obtained by using allocation $X$ for the scheduling instance $T$.
An observation is that $F(X,T)$ increases by at most a factor of $\eta$ if each entry in $T$ is scaled up or down by at most a factor of $\eta$.
According to the observation and the consistency proof, we have
\[ F(X, \optT) \leq \eta F(X,\predT) \leq (1+\gamma)\eta \cdot \ms(\predT). \]
Use $\optX$ to denote the optimal assignment of $\optT$. Also due to the observation, 
\[\ms(\predT) \leq F(\optX,\predT) \leq \eta \cdot \ms(\optT). \]
Thus, \[F(X,\optT) \leq (1+\gamma)\eta^2 \cdot \ms(\optT).\]

For the second case, we partition all jobs into two sets according to whether they have the same assignments in $X$ and $\predX$: $J_1:=\{j\in J | x_{ij}=\predx_{ij} \; \forall i\in M\}$ and $J_2 := J \setminus J_1$. According to the analysis in the previous paragraph, the contribution of $J_1$ to the makespan is at most $(1+\gamma)\eta^2 \ms(\optT)$.

Finally, we consider the contribution of $J_2$. Recollect that in $\predX$, if $\predx_{ij}=1$, $w_{ij}$ is either $\gamma/m$ or $\gamma^2/m^2$. For any job $j\in J_2$, it must be assigned to a machine $z$ with $w_{zj}=1$ in $X$. Since machine $z$ has the minimum weighted bid,
$\optt_{zj} = w_{zj}b_{zj} \leq \optmint_j.$
Thus, for any machine $z\in M$,
\[ \sum_{j\in J_2} \optt_{zj}x_{zj} \leq \sum_{j\in J_2}\optmint_j \leq m \cdot \ms(\optT) \leq \gamma \eta^2 \cdot \ms(\optT). \]

\end{proof}

Combining the above lemmas proves \Cref{thm:scheduling_main}. 

\section{Two-Facility Game on a Line}

In this section, we consider the problem of locating two facilities on a real line to serve a set of selfish agents. In the problem, there are $n$ agents and each agent $i$ has a private location $\optv_i$. The mechanism asks each agent $i\in [n]$ to report the location $x_i \in \R$, and then outputs two facility locations $l_1,l_2 \in \R$. Each agent $i$ wants to minimize the connection cost $c_i$, which is the distance between $\optv_i$ and the nearest facility. The goal is to design a truthful mechanism which minimizes the total connection cost. In mechanism design with predictions, we are given the predicted location $\predv_i$ of each agent $i\in [n]$. 
We show that in an environment without money, we can still use predictions to improve the approximation ratio slightly.


\begin{theorem}\label{thm:facility}
There exists a truthful deterministic mechanism with $(1+n/2)$-consistent and $(2n-1)$-robust.
\end{theorem}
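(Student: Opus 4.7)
The plan is to design a deterministic strategyproof mechanism that places one facility via a prediction-informed rule and the other via a classical order-statistic rule, trading a factor-of-two degradation in robustness for a roughly factor-of-two improvement in consistency over the Procaccia--Tennenholtz leftmost--rightmost mechanism. From $\predV$ I would first compute the optimal $2$-facility clustering $(\hat{C}_1,\hat{C}_2)$ with $|\hat{C}_1|\le|\hat{C}_2|$, and identify an ``anchor'' agent $a\in\hat{C}_2$ whose predicted value equals the predicted center of the larger cluster. The mechanism outputs one facility at the leftmost reported location of all $n$ agents and the other at the reported location of $a$. Because both the partition and the identity of $a$ depend only on the predictions, no agent can alter them by misreporting.

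Strategyproofness reduces to two standard pieces. Agent $a$ is a dictator for the prediction-driven facility and attains cost $0$ by reporting truthfully. For every other agent $i$, the prediction-driven facility is independent of $i$'s report; the only way $i$ can influence the outcome is via the leftmost facility, but the classical argument of Procaccia and Tennenholtz shows that misreporting further left than $i$'s true position strictly increases $i$'s distance to both facilities, so no manipulation is profitable.

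For consistency, with $\eta=1$ the anchor's reported location is the true optimum center $l_2^*$ of $C_2^*$, so the agents in $C_2^*$ contribute exactly $\opt_{C_2^*}$. The leftmost facility sits at the leftmost point of $C_1^*$, and the standard fact that placing a single facility at the leftmost of a $k$-point set has social cost at most $(k-1)$ times the $1$-median optimum, combined with $|C_1^*|\le n/2$, yields a total cost of at most $(1+n/2)\cdot\opt$.

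The main obstacle is the robustness bound $2n-1$. The naive anchor rule above is in fact \emph{not} robust by itself: when $\opt=0$ (all agents sitting at two points) an adversarial prediction can force $a$ to be an agent at the leftmost point, collapsing both facilities and making the ratio unbounded. Achieving bounded robustness therefore requires modifying the prediction-driven facility so it cannot collapse onto the leftmost in OPT-small configurations---for instance by replacing ``facility at $x_a$'' with a generalized median that mixes $x_a$ with a prediction-independent order statistic of the reports, so that the right tail of the instance is always covered. Verifying that this safeguard preserves strategyproofness and still yields $(1+n/2)$ consistency is the technical crux, and pinning the robustness constant down to exactly $2n-1$ then comes from matching an $(n-1)(v_{\max}-v_{\min})$ upper bound on the mechanism's cost with the corresponding $2$-median lower bound on $\opt$.
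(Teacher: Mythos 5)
Your high-level plan---use the predictions only to pick a ``dictator'' agent sitting at the center of the larger optimal cluster of the predicted instance, place one facility at that agent's report, and place the second facility by a prediction-independent rule---is exactly the paper's approach. The strategyproofness and consistency halves of your argument are essentially sound: the dictator's identity depends only on $\predV$, the dictator gets cost zero by truth-telling, and when $\eta=1$ the larger cluster is served optimally so only the at most $n/2$ agents of the smaller cluster pay extra. However, your proposal has a genuine gap, and you have located it yourself without closing it: the second facility cannot simply be the leftmost report (this collapses onto the dictator and gives an unbounded ratio), and your proposed repair (``a generalized median that mixes $x_a$ with a prediction-independent order statistic'') is left unspecified, with both its strategyproofness and its $(2n-1)$ robustness unverified. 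That repair is precisely the content of the theorem's hard half, so as written the proof is incomplete.

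The missing ingredient is not something you need to invent: the paper does not design a new two-facility rule at all. It takes the dictator mechanism of Lu, Sun, Wang and Zhu verbatim---place $l_1$ at the dictator's report $x_{\predi}$, compute $d_A=\max_{x_j\le x_{\predi}}(x_{\predi}-x_j)$ and $d_B=\max_{x_j\ge x_{\predi}}(x_j-x_{\predi})$, and set $l_2=l_1+\max\{2d_A,d_B\}$ if $d_A\le d_B$ (symmetrically otherwise)---and uses the prediction \emph{only} to choose which agent is the dictator. Truthfulness and the $(2n-1)$ robustness then come for free from the known analysis of that mechanism, which holds for an arbitrary choice of dictator; the only new work is the consistency bound, which requires checking that the $\max\{2d_A,d_B\}$ placement costs each small-cluster agent at most an additional $\opt$ when the predictions are exact (this also handles the left/right orientation issue that your fixed ``leftmost'' rule ignores when the smaller cluster lies to the right of the dictator). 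Your concluding sentence about matching an $(n-1)(v_{\max}-v_{\min})$ bound against a $2$-median lower bound does not constitute an argument for the $2n-1$ constant; you would need to either reprove the robustness of the specific second-facility rule or, as the paper does, cite it.
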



\begin{algorithm}[tb]
	\caption{\;Two-Facility Game with Predictions}
	\label{alg:facility}
	\textbf{Input}: The reported location profile $X=\{x_1,...,x_n\}$ and the predicted locations $\predV=\{\predv_1,...,\predv_n\}$. \\
	\textbf{Output}: The two facility locations.
	
	\begin{algorithmic}[1] 
	    \STATE Compute the optimal facility locations $\predl_1,\predl_2$ for $\predV$. We assume w.l.o.g.  that $\predl_1$ is the facility which attracts more agents and there exists an agent $\predi$ with $\predl_1=\predv_{\predi}$.
	    \STATE Set $l_1\leftarrow x_{\predi}$.
	    \STATE Define $d_A := \max \limits_{x_j \leq x_{\predi}} (x_{\predi}-x_j)$ and $d_B := \max \limits_{x_j \geq x_{\predi}} (x_j - x_{\predi})$.
	    \IF{$d_A \leq d_B$}
	    \STATE Set $l_2 \leftarrow l_1 + \max\{2d_A,d_B\}$.
	    \ELSE
	    \STATE Set $l_2 \leftarrow l_1 - \max\{d_A,2d_B\}$.
	    \ENDIF
		\STATE \textbf{return} $l_1$ and $l_2$.  
	\end{algorithmic}
\end{algorithm}

\subsection{Mechanism} 
The mechanism builds on the line mechanism~\cite{DBLP:conf/sigecom/LuSWZ10}. We notice that in the line mechanism, there exists an arbitrarily selected dictator.
Thus, a natural idea is leveraging predictions to select the dictator. 

An advantage of this idea is that the robustness can always be guaranteed because selecting any agent to be the dictator obtains an $O(n)$ approximation ratio. Then we check the structure of optimal solutions, and find that for any instance, there always exists an agent such that the line mechanism's performance can be improved if letting this agent be the dictator. Thus, setting the dictator to be such an agent in the predicted instance makes the mechanism consistent. The mechanism is described in \Cref{alg:facility}. 

\subsection{Analysis}






Since \Cref{alg:facility} is essentially the line mechanism with a dictator proposed in~\cite{DBLP:conf/sigecom/LuSWZ10}, the truthfulness and the robustness ratio are guaranteed from their analysis.

\begin{lemma}\label{lem:facility_DSIC_robust}
\Cref{alg:facility} is a truthful mechanism and has an approximation ratio at most $2n-1$.
\end{lemma}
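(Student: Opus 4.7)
The plan is to reduce Algorithm~\ref{alg:facility} to the dictator-based line mechanism of~\cite{DBLP:conf/sigecom/LuSWZ10} and inherit both truthfulness and the $(2n-1)$-approximation from there. The key observation is that the index $\predi$ is a function only of the prediction vector $\predV$, which is exogenous and never reported by an agent. Hence, once $\predi$ is fixed, the map from the reported profile $X$ to the output $(l_1,l_2)$ is exactly the dictator-line mechanism of~\cite{DBLP:conf/sigecom/LuSWZ10}, with $\predi$ acting as the (obliviously chosen) dictator. The value of $\predV$ can only affect \emph{which} agent plays the dictatorial role, never the logic applied afterwards.

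For truthfulness, I would split into two cases. The dictator $\predi$ gets $l_1 = x_{\predi}$, so reporting his true location yields connection cost $0$ and he cannot strictly benefit from a misreport. For any non-dictator $i$, the facility $l_1 = x_{\predi}$ is independent of $x_i$, so only $l_2$ is potentially manipulable through $d_A$ or $d_B$. A short case analysis on the sign of $\optv_i - x_{\predi}$ and on whether agent $i$ currently attains the max in $d_A$ or $d_B$ shows that any unilateral deviation either leaves the relevant distance unchanged or pushes $l_2$ weakly further from $\optv_i$; this is precisely the argument in~\cite{DBLP:conf/sigecom/LuSWZ10}, specialised to our choice of dictator.

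For the $(2n-1)$ bound, I would appeal directly to the analysis of~\cite{DBLP:conf/sigecom/LuSWZ10}, which is agnostic to how the dictator is selected: for any fixed dictator, their mechanism achieves social cost at most $(2n-1)$ times the optimum. Because prediction quality only influences the identity of $\predi$ and not the subsequent placement rule, the bound survives arbitrarily inaccurate $\predV$, giving the claimed robustness.

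The only technical wrinkle, and the one point where a short original argument is needed, is the well-posedness of Step~1: it assumes the existence of an agent $\predi$ with $\predl_1 = \predv_{\predi}$. This is not an additional hypothesis but a consequence of the structure of the optimum. On the line with sum-of-distance cost, the facility $\predl_1$ serving its attracted cluster of predicted agents is a $1$-median of that cluster, which can always be chosen at one of the predicted points. I would dispatch this in one short paragraph invoking the $1$-median property, so that the reduction to~\cite{DBLP:conf/sigecom/LuSWZ10} is clean and both conclusions of the lemma follow.
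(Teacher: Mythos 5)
Your proposal is correct and takes essentially the same route as the paper: the paper's proof is exactly the observation that \Cref{alg:facility} is the dictator-based line mechanism of~\cite{DBLP:conf/sigecom/LuSWZ10} with the dictator chosen solely from the (exogenous, unreported) predictions, so truthfulness and the $(2n-1)$ bound are inherited from that analysis. Your additional remarks on the well-posedness of choosing $\predi$ and on why the dictator's identity cannot be manipulated only make the reduction more explicit.
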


Now we focus on the consistency proof.

\begin{lemma}\label{lem:facility_consistency}
The approximation ratio of \Cref{alg:facility} is at most $1+n/2$ if the predictions are error-free. 
\end{lemma}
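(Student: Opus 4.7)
The plan is to compare the algorithm's cost agent-by-agent to the optimum, exploiting that perfect predictions force $x_i = \optv_i$ and hence $l_1 = x_{\predi} = \predv_{\predi} = \predl_1$. Let $S_1$ and $S_2$ be the optimal partition of agents, served by $\predl_1$ and $\predl_2$ respectively, with $|S_1| \geq |S_2|$ by the algorithm's choice of $\predl_1$, so $|S_2| \leq n/2$. Write $\opt = \opt_1 + \opt_2$ where $\opt_j := \sum_{i\in S_j} |\optv_i - \predl_j|$. Since each agent picks the nearer of $l_1, l_2$, the algorithm's cost is bounded by $\sum_{i\in S_1}|\optv_i - l_1| + \sum_{i\in S_2}\min(|\optv_i - l_1|, |\optv_i - l_2|)$. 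The first sum equals $\opt_1$ because $l_1 = \predl_1$, so the task reduces to bounding the $S_2$-sum by $\opt_2 + (n/2)\opt$.

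Reflecting the line if needed, I would assume $d_A \leq d_B$, so $l_2 = l_1 + c$ with $c = \max(2d_A, d_B)$. A standard exchange argument shows that a $2$-facility line optimum partitions agents into two consecutive intervals, so $S_2$ lies entirely on one side of $l_1$. If $S_2$ lies to the right of $l_1$, the rightmost agent (at $l_1 + d_B$) belongs to $S_2$, and I would split on which term attains $c$. When $c = d_B$, the displacement $|\predl_2 - l_2| = d_B - (\predl_2 - l_1)$ equals the optimal cost of that rightmost agent, so $|\predl_2 - l_2| \leq \opt_2 \leq \opt$; summing the triangle inequality $|\optv_i - l_2| \leq |\optv_i - \predl_2| + |\predl_2 - l_2|$ over $S_2$ gives the bound. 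When $c = 2d_A > d_B$, a direct calculation on $\min(y_i, c - y_i)$ for $y_i := \optv_i - l_1 \in [0, d_B]$ shows the algorithm cost of each $i \in S_2$ is at most $d_A$, while the leftmost agent (now in $S_1$, at $l_1 - d_A$) contributes $d_A$ to $\opt$, so the $S_2$-sum is at most $(n/2)d_A \leq (n/2)\opt$.

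If instead $S_2$ lies to the left of $l_1$, then $l_2$ is on the opposite side so the nearer facility for each $i \in S_2$ is $l_1$ and the algorithm cost is $l_1 - \optv_i$. The triangle inequality through $\predl_2$ gives $l_1 - \optv_i \leq (l_1 - \predl_2) + |\optv_i - \predl_2|$; since $\predl_2$ lies in the $S_2$-range $[l_1 - d_A, l_1]$ we have $l_1 - \predl_2 \leq d_A$, and since the rightmost agent (now in $S_1$, at $l_1 + d_B$) contributes $d_B$ to $\opt$ we have $d_A \leq d_B \leq \opt$, so the $S_2$-sum is again at most $\opt_2 + (n/2)\opt$.

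In every case the total is at most $\opt_1 + \opt_2 + (n/2)\opt = (1 + n/2)\opt$, as required. The main obstacle is the subcase $c = 2d_A$, where $|\predl_2 - l_2|$ can be as large as $2d_A$; the triangle-inequality argument used elsewhere would then lose a factor of $2$. Avoiding this loss requires bounding each $S_2$-agent's algorithm cost directly by $d_A$ and then witnessing $d_A \leq \opt$ through the far leftmost $S_1$-agent rather than through $\predl_2$.
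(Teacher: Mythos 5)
Your proof is correct and follows essentially the same route as the paper's: the same $S_1/S_2$ decomposition with $|S_2|\le n/2$, the $S_1$ costs matching $\opt$ exactly because $l_1=\predl_1$, and the $S_2$ agents' extra cost charged to the optimal costs of the leftmost and rightmost agents. The only difference is that in the case $\predl_1<\predl_2$ the paper avoids your subcase split on which term attains $\max\{2d_A,d_B\}$ by decomposing the displacement as $l_2-\predl_2=(l_2-\optv_b)+(\optv_b-\predl_2)\le d_A+\optc_b=\optc_a+\optc_b\le\opt$, so the triangle inequality through $\predl_2$ never incurs the factor-of-$2$ loss you were guarding against.
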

\begin{proof}
Without loss of generality, we assume that $d_A \leq d_B$. 
Define $S_1$ and $S_2$, respectively, to be the agents connecting to $\predl_1$ and $\predl_2$ in the optimal solution. Since $\predl_1$ attracts more agents, $|S_1|\geq n/2$ and $|S_2 |\leq n/2$.
We distinguish two cases according to whether $\predl_1$ is the left facility: (1) $\predl_1 < \predl_2$ and $\predl_1>\predl_2$. 

Use $c_i$ and $\optc_i$ to denote the connection costs of agent $i$ in \cref{alg:facility} and the optimal solution respectively.
Note that regardless of which case, for any agent $i\in S_1$, \[c_i \leq |l_1-\optv_i| = |\predl_1 - \optv_i| = \optc_i\] when the predictions are error-free. Thus, we only need to analyze the connection cost of agent set $S_2$.

Say agent $a$ and $b$ are, respectively, the leftmost and the rightmost agents. In the first case, for any agent $i\in S_2$,
\begin{equation*}
    \begin{aligned}
    c_i &\leq l_2-\optv_i \\
    &\leq |\predl_2-\optv_i| + (l_2-\predl_2)\\
    &= \optc_i + l_2-\optv_b + \optv_b - \predl_2 \\
    &\leq \optc_i + d_A + \optv_b - \predl_2 \\
    \end{aligned}
\end{equation*}

Since $\predl_1 < \predl_2$, agent $a$ (resp. agent $b$) must be served by $\predl_1$ (resp. $\predl_2$) in the optimal solution. Thus, $\optc_a = d_A$ and $\optc_b = \optv_b -\predl_2$. We can get an upper bound of $c_i$: $c_i \leq \optc_i + \opt$.

In the second case, for any agent $i\in S_2$, 
\[c_i \leq |\optv_i-l_1| = |\optv_i-\predl_1| \leq d_B \leq \opt.\]
The last inequality holds because $\predl_1$ is the right facility, which definitely serves the rightmost agent.

Now we are ready to bound the total connection cost.

\begin{equation*}
    \begin{aligned}
    \sum_{i\in [n]}c_i &= \sum_{i\in S_1} c_i + \sum_{i\in S_2} c_i \\
    & \leq \sum_{i\in S_1} \optc_i + \sum_{i\in S_2} \optc_i + |S_2|\cdot \opt \\
    & \leq (1+\frac{n}{2})\opt. 
    \end{aligned}
\end{equation*}

\end{proof}
Combining the above lemmas completes the proof of \Cref{thm:facility}.

\section{Conclusion}

This paper initiates the systematic study of mechanism design with predictions, and develops a set of techniques that can be applied to a large number of classic mechanism design settings. There leave quite a lot of possibilities for future work. For example, on truthful job scheduling, whether such a trade-off between consistency and robustness is optimal (under Nisan-Ronen conjecture) is a very interesting open problem. 

For simplicity, we directly use private information as our predictions. One can of course study other maybe more limited information in the prediction model. The metric for prediction errors is another place one can introduce other possible modelings. 

We mainly focus on deterministic mechanisms in the paper. One can natural study how predictions help randomized mechanisms. Typically for maximization problems, one trivial mechanism is a random combination of the worst case mechanism and a mechanism which trusts the predictions completely. Can we do much better than this? 

To sum up, this is a wildly open research direction worth exploring.  





\newpage
\bibliographystyle{alpha}

\newpage
\appendix
\section{The Two-Bidder Case}\label{sec:auction_n2}

We show the mechanism for the special case that $n=2$ in \Cref{alg:auction_n2}. Since increasing any bidder's bid can only improve the chance that he wins, we has the following lemma.

\begin{algorithm}[htbp]
	\caption{Single-Item Auction with Predictions $(n=2)$}
	\label{alg:auction_n2}
	\textbf{Input}: The predicted private values $\predV=\{\predv_1,\predv_2\}$, the bids $B=\{b_1,b_2\}$ and a parameter $\gamma \geq 1$. \\
	\textbf{Output}: The winner and the payment.  
	
	\begin{algorithmic}[1] 
		\STATE Reindex the bidders in the decreasing order of their predicted values.
		\IF { $\predv_2 < \predv_1 / \gamma^2$ }
		\STATE Set up a bar function: $br(1)\leftarrow \predv_1 /\gamma$ and $br(2) \leftarrow 1$.
		\STATE Define a bidder set $S:= \{ i \in [n] | b_i \geq br(i) \}$.
		\STATE Let the bidder $j\in S$ with the highest bid win.
		\ELSE
		\IF{$b_1 \geq \predv_1 / \gamma$}
		\STATE Let bidder $1$ win directly.
		\ELSE
		\STATE Assign a weight $w$ to each bidder: $w_1\leftarrow 1/\gamma^2$ and $w_2\leftarrow 1$.
		\STATE Define a bidder set $S:= \{ i \in [n] | b_i \geq 1 \}$.
		\STATE Let the bidder $i \in S$ with the highest $w_ib_i$ win.
		\ENDIF		
		\ENDIF
		\STATE \textbf{return} the winner $j$ and his threshold bid $\theta(j)$.
	\end{algorithmic}
\end{algorithm}

\begin{lemma}\label{lem:auction_DSIC_n2}
	\Cref{alg:auction_n2} is a truthful mechanism.
\end{lemma}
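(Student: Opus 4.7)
The plan is to invoke Myerson's Lemma (\Cref{thm:myerson}). Since line~16 of \Cref{alg:auction_n2} sets the winner's payment to the threshold bid $\theta(j)$, the mechanism is truthful as soon as the allocation rule is shown to be monotone in the sense of \Cref{def:Monotonicity}: whenever bidder~$i$ wins with bid $b_i$, he continues to win after unilaterally raising $b_i$ while the other bid is held fixed. So the work is entirely in verifying monotonicity.

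A useful first observation is that the outer branch (Case~1 versus Case~2) depends only on the predictions $\predv_1,\predv_2$, which are exogenous inputs, and not on the reported bids. Hence I can treat the two outer branches independently. Case~1, where $\predv_2 < \predv_1/\gamma^2$, is nearly immediate: the bars $br(1)=\predv_1/\gamma$ and $br(2)=1$ do not depend on any bid, so raising a winning bidder's bid can neither remove him from $S$ nor demote him from being the maximum bid within $S$.

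Case~2 is the substantive part, because it contains an inner branch that itself depends on $b_1$: sub-case~2a assigns the item to bidder~1 directly whenever $b_1 \geq \predv_1/\gamma$, while sub-case~2b runs a weighted auction with $w_1=1/\gamma^2$, $w_2=1$ when $b_1 < \predv_1/\gamma$. The plan is to dispatch three scenarios. If bidder~1 wins in sub-case~2a, raising $b_1$ preserves $b_1\geq\predv_1/\gamma$ and he still wins. If bidder~1 wins in sub-case~2b, then $b_1/\gamma^2 \geq b_2$; raising $b_1$ either keeps us in~2b (in which case his weighted score $b_1/\gamma^2$ only grows, so he still wins) or pushes us into~2a (where he wins by fiat). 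If bidder~2 wins, this must happen in sub-case~2b; raising $b_2$ leaves $b_1$, and hence the sub-case, unchanged, while preserving both $b_2\geq 1$ and $b_2>b_1/\gamma^2$, so bidder~2 still wins.

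The main obstacle, and essentially the only non-routine step, is the discontinuity at the boundary between sub-cases~2a and~2b as $b_1$ crosses the threshold $\predv_1/\gamma$: the very form of the allocation rule changes there. In general such a discontinuity could break monotonicity, but in our construction the transition is arranged so that crossing upward can only reinforce bidder~1's winning status, never cancel it. Once monotonicity is in hand, \Cref{thm:myerson} applied with the threshold-bid payment delivers truthfulness and, by definition of the threshold bid, non-negative utility for truthful bidders, completing the proof.
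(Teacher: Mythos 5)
Your proof is correct and takes essentially the same route as the paper: the paper's own argument is just the one-line observation that increasing any bidder's bid can only improve his chance of winning, followed by an (implicit) appeal to \Cref{thm:myerson}. You simply spell out the case analysis—including the only genuinely delicate point, the transition across the $b_1 \geq \predv_1/\gamma$ boundary—that the paper leaves to the reader.
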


Similar with the proof of~\Cref{lem:auction_h}, we can show the approximation ratio is always at most $h$.

\begin{lemma}\label{lem:auction_h_n2}
	The approximation ratio of~\Cref{alg:auction_n2} is at most h.
\end{lemma}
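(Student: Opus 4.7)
The plan is to mirror the proof of~\Cref{lem:auction_h}. Since all private values lie in $[1,h]$, the optimal revenue $\opt$ is at most $h$, so it suffices to show that the winner's payment in~\Cref{alg:auction_n2} is always at least $1$; the claimed bound $h$ on the approximation ratio then follows immediately.

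To establish the payment lower bound I would carry out a case analysis matching the three branches of~\Cref{alg:auction_n2}. In the first branch ($\predv_2 < \predv_1/\gamma^2$), bidder $2$'s bar is $br(2)=1$, so whoever wins pays at least $1$: if bidder $1$ wins, his threshold is at least $b_2 \geq 1$; if bidder $2$ wins, his threshold is either the bar $1$ (when bidder $1$ is excluded from $S$) or $b_1 \geq 1$ (when bidder $1$ is also in $S$). In the second branch (Case 2a, in which bidder $1$ wins directly), I would compute bidder $1$'s threshold as the infimum of bids $b_1 \in [1,h]$ for which he still wins; this can occur either through Case 2a ($b_1 \geq \predv_1/\gamma$) or through Case 2b ($b_1 > \gamma^2 b_2$), and since $\gamma^2 b_2 \geq 1$, the resulting threshold is at least $1$. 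In the weighted branch (Case 2b), if bidder $2$ wins his threshold is $\max(1, b_1/\gamma^2) \geq 1$; if instead bidder $1$ wins, the same combined analysis as Case 2a applies.

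The main subtlety---and the place where this proof diverges from that of~\Cref{lem:auction_h}---is that~\Cref{alg:auction_n2} does not explicitly take the maximum with $1$ when defining the bar $br(1)=\predv_1/\gamma$ or the weighted critical value $b_1/\gamma^2$; these can themselves dip below $1$ when $\predv_1 < \gamma$ or $b_1 < \gamma^2$. The rescue is that bids are constrained to $[1,h]$: whenever a raw critical value falls below $1$, the winner already wins for every feasible opposing bid, so the effective threshold clipped to the bid domain is $1$. Equivalently, the opponent's bid, which is always at least $1$, binds the winner's threshold from below. Once this clipping observation is in hand, the case analysis above completes the proof.
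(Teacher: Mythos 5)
Your proof is correct and follows essentially the same route the paper intends (it only gestures at ``similar to Lemma~\ref{lem:auction_h}''): bound $\opt\leq h$ and show the winner's payment is at least $1$. Your extra observation that \Cref{alg:auction_n2} omits the explicit $\max\{\cdot,1\}$ in the bar, and that the bid domain $[1,h]$ clips the threshold to at least $1$ anyway, is a genuine and correctly resolved subtlety rather than a divergence in approach.
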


Notice that when $\predv_2 < \predv_1 / \gamma^2$, the mechanism, as well as the ratio analysis, is exactly the same as~\Cref{alg:auction}. Thus, we only focus on the case that $\predv_2 \geq \predv_1 / \gamma^2$ in the following.

Analogously, we distinguish two cases: (1) $\eta \leq \gamma$ and (2) $\eta > \gamma$.

\begin{lemma}\label{lem:auction_small_eta_n2}
	The approximation ratio of \Cref{alg:auction_n2} is at most $\gamma \eta$ if $\eta \leq \gamma$.
\end{lemma}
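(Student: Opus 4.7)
The plan is to leverage the direct-win branch of \Cref{alg:auction_n2} together with a careful threshold computation, and then split the payment analysis into two sub-cases based on which of the two ``walls'' determines the winner's threshold.

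First I would argue that, under the hypotheses $\eta\leq\gamma$ and $\predv_2\geq \predv_1/\gamma^2$, bidder~$1$ is necessarily the winner via the direct-win branch. Indeed, by truthfulness (\Cref{lem:auction_DSIC_n2}) we may take $b_1=\optv_1$, and since $\optv_1\geq \predv_1/\eta\geq \predv_1/\gamma$, the condition $b_1\geq \predv_1/\gamma$ of the inner \textbf{if} holds. Hence the payment equals $\theta(1)$, the threshold bid of bidder~$1$.

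Next I would compute this threshold. Below the bar $\predv_1/\gamma$, bidder~$1$ can only win through the weighted branch, and there he wins iff $b_1>\gamma^2 b_2$ (using $b_1,b_2\geq 1$, which holds since private values lie in $[1,h]$). Therefore
\[
\theta(1)=\min\!\bigl\{\predv_1/\gamma,\; \gamma^2\optv_2\bigr\}.
\]
From here I split into two sub-cases. In Sub-case~A, $\gamma^2\optv_2\geq \predv_1/\gamma$, so $\pay=\predv_1/\gamma$; using $\optv_1\leq \eta\predv_1$ and $\optv_2\leq \eta\predv_2\leq \eta\predv_1$, both $\optv_1$ and $\optv_2$ are bounded by $\gamma\eta\cdot \pay$, hence $\opt\leq \gamma\eta\cdot\pay$. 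In Sub-case~B, $\gamma^2\optv_2<\predv_1/\gamma$ and $\pay=\gamma^2\optv_2$; combining $\optv_1\leq \eta\predv_1\leq \gamma^2\eta\predv_2\leq \gamma^2\eta^2\optv_2$ with $\eta\leq\gamma$ gives $\opt\leq \gamma^2\eta^2\optv_2\leq \gamma^3\eta\optv_2=\gamma\eta\cdot\pay$. In either sub-case the ratio is at most $\gamma\eta$.

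The main obstacle I expect is Sub-case~B: because $\optv_2$ can be tiny while $\optv_1$ is close to $\predv_1$, the payment $\gamma^2\optv_2$ looks dangerously small relative to $\opt=\optv_1$. The point that saves us is precisely the standing assumption $\predv_2\geq \predv_1/\gamma^2$, which forces $\predv_1\leq \gamma^2\predv_2$ and hence $\optv_1\leq \gamma^2\eta^2\optv_2$; one then uses $\eta\leq\gamma$ to absorb an extra factor of $\eta/\gamma\leq 1$ and land at the target bound $\gamma\eta\cdot\pay$. Everything else reduces to the routine threshold-bid bookkeeping already used in \Cref{lem:auction_small_eta}.
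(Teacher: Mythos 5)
Your proof is correct and follows essentially the same route as the paper's: bidder~$1$ wins via the direct branch because $b_1=\optv_1\geq \predv_1/\eta\geq\predv_1/\gamma$, and the payment is his threshold bid, which is lower-bounded by examining when he could still win in the weighted branch. The only difference is that the paper's inequality chain ($w_1x<\frac{\predv_1}{\gamma^2}\cdot\frac{1}{\gamma}\leq\frac{\predv_2}{\gamma}\leq\frac{\eta\optv_2}{\gamma}\leq w_2\optv_2$) shows your Sub-case~B (i.e.\ $\gamma^2\optv_2<\predv_1/\gamma$) is actually vacuous under the standing assumptions $\eta\leq\gamma$ and $\predv_2\geq\predv_1/\gamma^2$, so the threshold equals $\predv_1/\gamma$ exactly; your separate treatment of that sub-case is harmless but unnecessary.
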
 
\begin{proof}
	As shown in the proof of \Cref{lem:auction_small_eta}, $b_1 \geq \predv_1 /\gamma$ when $\eta \leq \gamma$. Thus, bidder $1$ always wins. The tricky part is the threshold bid.
	We show that if bidder $1$ misreports a bid $x < \predv_1 / \gamma$, he cannot win.
	\begin{equation*}
		\begin{aligned}
		w_1x &< \frac{1}{\gamma^2} \cdot \frac{\predv_1}{\gamma} \\
		&=\frac{\predv_1}{\gamma^2} \cdot \frac{1}{\gamma}\\
		& \leq \frac{\predv_2}{\gamma } \;\;\;\; \mbox{[the condition that $\predv_2 \geq \predv_1 / \gamma^2$ ]} \\
		& \leq \frac{\eta \optv_2}{\gamma} \\
		& \leq w_2\optv_2 \;\;\;\; \mbox{[the condition that $\eta \leq \gamma$ ]}. \\		
		\end{aligned}
	\end{equation*}
	
	 Thus, we can bound the threshold bid of bidder $1$:
	\begin{equation*}
		\begin{aligned}
		\pay &= \theta(1) = \frac{\predv_1}{\gamma} \\
		& = \frac{\max\{\predv_1,\predv_2\}}{\gamma}\\
		& \geq \frac{\max\{\optv_1,\optv_2\}}{\gamma \eta} \\
		&= \frac{\opt}{\gamma \eta} .
		\end{aligned}
	\end{equation*}
\end{proof}

\begin{lemma}\label{lem:auction_large_eta_n2}
	The approximation ratio of \Cref{alg:auction_n2} is at most $\max\{\gamma^2\eta^2,\frac{h\eta}{\gamma^2}\}$ if $\eta > \gamma$.
\end{lemma}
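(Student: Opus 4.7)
The plan is to mirror Lemma~\ref{lem:auction_large_eta} but adapt it to the two-bidder weighted branch of Algorithm~\ref{alg:auction_n2}. By Lemma~\ref{lem:auction_DSIC_n2} both bidders report truthfully, so $b_i=\optv_i$. I first extract one useful inequality from the branch condition $\predv_2 \geq \predv_1/\gamma^2$ combined with the reindexing $\predv_1 \geq \predv_2$: using $\optv_2 \leq \eta\predv_2$ and $\predv_1 \leq \eta\optv_1$,
\[\optv_2 \;\leq\; \eta\predv_2 \;\leq\; \eta\predv_1 \;\leq\; \eta^2\optv_1.\]
I then split on whether Algorithm~\ref{alg:auction_n2} executes the direct-win step or the weighted step for bidder~1.

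Case~A: $b_1 \geq \predv_1/\gamma$ (direct win). Bidder~1 wins, and his threshold bid equals $\min\{\predv_1/\gamma,\, \gamma^2\optv_2\}$, because dropping $b_1$ just below $\predv_1/\gamma$ sends the mechanism into the weighted branch, in which bidder~1 still wins while $b_1/\gamma^2 > \optv_2$. If the minimum is $\predv_1/\gamma$, the same chain of inequalities used in Lemma~\ref{lem:auction_small_eta_n2} gives $\pay \geq \opt/(\gamma\eta)$ and hence ratio $\leq \gamma\eta \leq \gamma^2\eta^2$. If the minimum is $\gamma^2\optv_2 \geq \gamma^2$, then $\opt \leq h$ yields ratio at most $h/\gamma^2 \leq h\eta/\gamma^2$.

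Case~B: $b_1 < \predv_1/\gamma$ (weighted step with $w_1=1/\gamma^2$, $w_2=1$). Sub-split on the winner. If bidder~1 wins, the winning condition $\optv_1/\gamma^2 > \optv_2$ forces $\opt=\optv_1<\predv_1/\gamma \leq h/\gamma$, while the threshold bid is $\gamma^2\optv_2 \geq \gamma^2$; this gives ratio $\leq h/\gamma^3 \leq h\eta/\gamma^2$ (using $\eta > \gamma \geq 1$). If bidder~2 wins, his threshold bid is $\max\{1,\, \optv_1/\gamma^2\}$. When $\opt=\optv_1$ the ratio is at most $\gamma^2 \leq \gamma^2\eta^2$. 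When $\opt=\optv_2$ and $\pay \geq \optv_1/\gamma^2$, the global inequality $\optv_2 \leq \eta^2\optv_1$ delivers ratio $\leq \gamma^2\eta^2$; the remaining sub-situation is $\pay=1 > \optv_1/\gamma^2$, so $\optv_1 \leq \gamma^2$ and hence $\opt=\optv_2 \leq \eta^2\gamma^2$, again giving ratio $\leq \gamma^2\eta^2$.

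The main subtlety I expect is pinning down the threshold bid in Case~A: the winning set for bidder~1 is the union of the direct-branch interval $[\predv_1/\gamma, h]$ and a (possibly empty) segment of the weighted branch, so the threshold is the minimum of two natural candidates rather than simply $\predv_1/\gamma$. Once this ``two-channel'' monotonicity is argued carefully, the two sub-cases of Case~A feed the two terms of the target bound, and the rest of the case analysis is routine bookkeeping.
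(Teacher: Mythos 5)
Your proof is correct and follows essentially the same route as the paper's: the same case split on whether bidder~$1$ clears the bar $\predv_1/\gamma$ and then on who wins, with the same threshold-bid computations (including the observation that in the direct-win case the threshold is the minimum of $\predv_1/\gamma$ and the weighted-branch threshold $\gamma^2\optv_2$, which the paper phrases as ``whether bidder $1$ can still win with a bid below $\predv_1/\gamma$''). The only differences are cosmetic: in two sub-cases you settle for the crude bounds $h/\gamma^2$ and $h/\gamma^3$ via $\pay\ge\gamma^2$ and $\opt\le h$, where the paper derives the sharper $\eta^2$ using $\predv_2\ge\predv_1/\gamma^2$; both estimates sit below the claimed $\max\{\gamma^2\eta^2,h\eta/\gamma^2\}$.
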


\begin{proof}
	We further distinguish two subcases according to whether $b_1 \geq \predv_1 /\gamma$ or not. For the first subcase that $b_1 \geq \predv_1 /\gamma$, if bidder $1$ cannot win with a misreported bid $x < \predv_1 / \gamma$, the threshold bid is $\predv_1/\gamma$, leading to an approximation ratio of $\gamma \eta$. Otherwise, we see that the optimal payment must be $\optv_1$. Thus, 
	\begin{equation*}
		\begin{aligned}
		\theta(1) &= \frac{w_2}{w_1} \optv_2 \\
		 & = \gamma^2 \optv_2 \\
		& \geq \frac{\gamma^2}{\eta } \predv_2 \\
		&\geq \frac{\gamma^2}{\eta}\cdot \frac{\predv_1}{\gamma^2} \;\;\;\; \mbox{[the condition that $\predv_2 \geq \predv_1 / \gamma^2$ ]} \\
		& = \frac{\predv_1}{\eta} \\
		& \geq \frac{\optv_1}{\eta^2} \\
		& = \frac{\opt}{\eta^2} \\
		\end{aligned}
	\end{equation*}
	
	For the second subcase that $b_1 < \predv_1 /\gamma$, if bidder $1$ wins, we still have $\opt=\optv_1$ and $\theta(1)=w_2 \optv_2/w_1$, implying an approximation ratio of $\eta^2$. Now suppose that bidder $2$ wins. Clearly, the payment is $w_1\optv_1$. If $\opt=\optv_1$, the approximation ratio is at most $1/w_1=\gamma^2$. Otherwise, 
	\begin{equation*}
		\begin{aligned}
		\pay &= w_1\optv_1 = \frac{\optv_1}{\gamma^2} \\
		& \geq \frac{\predv_1}{ \gamma^2 \eta} \\
		& \geq \frac{\predv_2}{\gamma^2 \eta} \\
		& \geq \frac{\optv_2}{\gamma^2 \eta^2} \\
		& = \frac{\opt}{\gamma^2 \eta^2}.
		\end{aligned}
	\end{equation*}
\end{proof}

Combining \Cref{lem:auction_DSIC_n2}, \Cref{lem:auction_h_n2}, \Cref{lem:auction_small_eta_n2} and \Cref{lem:auction_large_eta_n2}, we see that \cref{thm:auction_main} holds when $n=2$.

\newpage
\section{Another Mechanism for Path Auction}\label{sec:path_auction_sqrt}

\begin{algorithm}[tb]
	\caption{\;$\sqmechanism$-Mechanism with Predictions}
	\label{alg:path_auction_sqrt}
	\textbf{Input}: A graph $G$, the predicted edge costs $\predC=\{\predc_e\}_{e\in E}$, the bids $B=\{b_e\}_{e\in E}$ and a parameter $\gamma \in [1,n^{1/3}]$. \\
	\textbf{Output}: The winner and the payments.  
	
	\begin{algorithmic}[1] 
		\STATE Find two paths $\predL_1$ and $\predL_2$ minimizing $\predc(\predL_1) + \predc(\predL_2)$.
		\STATE Define path $\predL \in \{L_1,L_2\}$ to be the path with the minimum $\sqrt{|\predL|}\predc(\predL)$. 
		\STATE For any $s$-$t$ path $L' \neq \predL$, set its weight $w(L') \leftarrow 1$.
		\IF { $\forall e \in \predL, b_e \leq \gamma \predc_e$ }
		\STATE Set $w(\predL) \leftarrow \gamma / n$.
		\ELSE
		\STATE Set $w(\predL) \leftarrow n / \gamma$.
		\ENDIF
		\STATE Find two paths $L_1$ and $L_2$ minimizing $w(L_1)b(L_1)+w(L_2)b(L_2)$. 
		\STATE \textbf{return} the path $L \in \{L_1,L_2\}$ with the minimum $\sqrt{|L|}w(L)b(L)$ and the threshold bid of each edge $e\in L$.
	\end{algorithmic}
\end{algorithm}

This section shows how to leverage the idea of $\sqmechanism$-mechanism in the learning-augmented path mechanism. The mechanism is stated in \Cref{alg:path_auction_sqrt}. With the similar analysis, we obtain the following theorem.
\begin{theorem}\label{thm:path_auction_sqrt}
	Let path $L_x$ and $L_y$ be the cheapest and the second cheapest paths respectively. \Cref{alg:path_auction_sqrt} is truthful and has a frugal ratio at most $f(\eta)$, where
	$$
	f(\eta)= \left\{
	\begin{aligned}
	&\gamma(1+\eta) & \eta \leq \gamma \\
	&\frac{n}{\gamma}\sqrt{|L_x|} \sqrt{ \max \{|L_y|,|\predL|\}} & \eta > \gamma \\
	\end{aligned}
	\right.
	$$.
\end{theorem}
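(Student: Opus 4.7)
I would mirror the structure of the proof of \Cref{thm:path_auction_main}, adapting each step to account for the extra $\sqrt{|L|}$ weighting in \Cref{alg:path_auction_sqrt}. Truthfulness comes first: for any edge $e$, reducing $b_e$ weakly reduces $\sqrt{|L|}w(L)b(L)$ for every path $L$ containing $e$ while leaving the score of any path disjoint from $e$ unchanged, so the allocation rule is monotone. Combined with threshold-bid payments, \Cref{thm:myerson} gives truthfulness.

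For the case $\eta \leq \gamma$, I would imitate \Cref{lem:path_auction_small_eta}. The inequality $b_e = \optc_e \leq \eta \predc_e \leq \gamma \predc_e$ for $e \in \predL$ forces $w(\predL)=\gamma/n$. Using that $\predL$ minimizes $\sqrt{|\predL|}\predc(\predL)$ over the two cheapest predicted paths and $\gamma \leq n^{1/3}$, I would check that $\sqrt{|\predL|}w(\predL)b(\predL) \leq \sqrt{|L'|}w(L')b(L')$ for every other path $L'$, so $\predL$ is returned. The threshold-bid of each $e\in\predL$ splits into the same two cases as in the original analysis (either $\theta(e)\leq\gamma\predc_e$, or the weight flips to $n/\gamma$ and one solves the identity $\tfrac{n}{\gamma}(\optc(\predL)-\optc_e+\theta(e))\sqrt{|\predL|}=w(L')b(L')\sqrt{|L'|}$ for $\theta(e)$). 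Summing over $\predL$ and simplifying yields $\gamma(1+\eta)$ against the second cheapest path cost, exactly as before.

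For $\eta>\gamma$, the interesting piece, I would let $L$ denote the returned path and let $L^\dagger$ denote the ``other'' path used at the threshold of an edge $e \in L$. By the algorithm's selection rule, $\sqrt{|L|}w(L)b(L) \leq \sqrt{|L^\dagger|}w(L^\dagger)b(L^\dagger)$ and the sum $w(L_1)b(L_1)+w(L_2)b(L_2)$ is minimum over all pairs. Threshold-bid manipulation then gives $\theta(e) - b_e \leq \sqrt{|L^\dagger|/|L|}\,w(L^\dagger)b(L^\dagger)/w(L)$, and summing $\theta(e)$ over $e\in L$ and applying Cauchy--Schwarz in the standard $\sqmechanism$ fashion produces a total payment of order $\sqrt{|L|}\sqrt{|L^\dagger|}\,w(L^\dagger)b(L^\dagger)/w(L)$. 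The prediction-modified weights pick up an additional factor of at most $1/w(\predL)=n/\gamma$ whenever $\predL$ is the high-weight path. Casing on whether $L=\predL$ or $L\neq\predL$, and on whether the relevant $L^\dagger$ is the cheapest path $L_x$, the second cheapest $L_y$, or $\predL$ itself, bounds the ratio against $\optc(L_y)$ by $\frac{n}{\gamma}\sqrt{|L_x|}\sqrt{\max\{|L_y|,|\predL|\}}$.

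The main obstacle will be the large-$\eta$ case: carefully tracking which path plays the role of $L^\dagger$ at each edge's threshold, and showing that when $\predL$ must be substituted for $L^\dagger$ the $|\predL|$ term replaces $|L_y|$ inside the square root. All other steps are routine adaptations of the analyses of \Cref{alg:path_auction} and of the classical $\sqmechanism$-mechanism from \cite{DBLP:conf/focs/KarlinKT05}.
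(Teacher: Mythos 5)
The paper does not actually prove this theorem: Appendix B states only ``With the similar analysis, we obtain the following theorem,'' so there is no written argument to compare yours against. Your outline is exactly the approach the paper intends --- reuse the truthfulness and small-$\eta$ arguments from \Cref{alg:path_auction} and graft on the $\sqmechanism$ analysis of \cite{DBLP:conf/focs/KarlinKT05} for the large-$\eta$ case --- and for truthfulness and for $\eta \le \gamma$ your sketch is a correct routine adaptation (with one small point worth spelling out for monotonicity: lowering $b_e$ for $e \in \predL$ can also flip $w(\predL)$ from $n/\gamma$ to $\gamma/n$, and lowering $b_e$ keeps $L$ in the top-two set while leaving the identity of the other top-two path unchanged, so the two-stage selection is still monotone).

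The genuine gap is that the only nontrivial part of the theorem, the $\eta > \gamma$ bound $\frac{n}{\gamma}\sqrt{|L_x|}\sqrt{\max\{|L_y|,|\predL|\}}$, is exactly the part you defer as ``the main obstacle.'' A plan that postpones the one step where the claimed expression has to emerge is not yet a proof. Concretely, two things must be worked out rather than asserted. First, the returned path $L$ need not be the cheapest path $L_x$, so the factor $\sqrt{|L_x|}$ cannot come from $|L|$ directly; you must use the selection rule $\sqrt{|L|}\,w(L)b(L) \le \sqrt{|L^\dagger|}\,w(L^\dagger)b(L^\dagger)$ itself (the standard $\sqmechanism$ trick) to trade $\sqrt{|L|}$ for a quantity controlled by $L_x$, and verify this survives the weights $w \in \{\gamma/n, 1, n/\gamma\}$. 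Second, your per-edge threshold bound $\theta(e) \le \sqrt{|L^\dagger|/|L|}\,w(L^\dagger)b(L^\dagger)/w(L)$ only captures the threshold coming from the square-root comparison; an edge of $L$ can also lose by pushing $L$ out of the top-two set, and when $e \in \predL$ raising $b_e$ can flip $w(\predL)$ to $n/\gamma$ mid-analysis (as in the proof of \Cref{lem:path_auction_small_eta}), so the threshold is the minimum of several events and the case analysis over $w(\predL)$ and over whether $\predL \in \{L_1, L_2\}$ must be carried out explicitly to see that the loss factor is $n/\gamma$ and that $\max\{|L_y|, |\predL|\}$ (rather than just $|L_y|$) is what appears under the square root. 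Until that computation is done, the stated $f(\eta)$ for $\eta > \gamma$ is plausible but unverified.
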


In the worst case, \Cref{alg:path_auction_sqrt} shares the same robustness ratio $n^2/\gamma$ with \Cref{alg:path_auction}. But for some graphs (e.g. $|L_x|$ is small), \Cref{alg:path_auction_sqrt}  gives a better frugal ratio.

\end{document}